\newtheoremstyle{propstyle} 
    {2mm}                    
    {1mm}                    
    {\itshape}                   
    {}                           
    {\scshape}                   
    {.}                          
    {.5em}                       
    {}  
\theoremstyle{propstyle}
\newtheorem{proposition}{Proposition}
\theoremstyle{propstyle}
\theoremstyle{propstyle}
\theoremstyle{propstyle}
\theoremstyle{propstyle}
\renewcommand{\paragraph}{%
  \@startsection{paragraph}{4}%
  {\z@}{2ex \@plus 1ex \@minus .2ex}{-1em}%
  {\normalfont\normalsize\bfseries}%
}
\DeclareMathAlphabet\mathbfcal{OMS}{cmsy}{b}{n}
\newcommand{\bd}{\mathbf{d}}
\renewcommand{\bf}{\mathbf{f}}
\newcommand{\bj}{\mathbf{j}}
\newcommand{\bs}{\mathbf{s}}
\newcommand{\by}{\mathbf{y}}
\newcommand{\bk}{\mathbf{k}}
\newcommand{\bw}{\mathbf{w}}
\newcommand{\bz}{\mathbf{z}}
\newcommand{\bY}{\mathbf{Y}}
\newcommand{\bF}{\mathbf{F}}
\newcommand{\bG}{\mathbf{G}}
\newcommand{\bE}{\mathbf{E}}
\newcommand{\bI}{\mathbf{I}}
\newcommand{\bV}{\mathbf{V}}
\newcommand{\bK}{\mathbf{K}}
\newcommand{\bQ}{\mathbf{Q}}
\newcommand{\bfzero}{\mathbf{0}}
\newcommand{\bfmu}{\bm{\mu}}
\newcommand{\bfxi}{\bm{\xi}}
\newcommand{\bftheta}{\bm{\theta}}
\newcommand{\bfkappa}{\bm{\kappa}}
\newcommand{\bfpsi}{\bm{\psi}}
\newcommand{\bfSigma}{\bm{\Sigma}}
\newcommand{\diag}{diag}
\newcommand{\GP}{\mathcal{GP}}
\newcommand{\order}{\mathcal{O}}
\newcommand{\normal}{\mathcal{N}}
\newcommand{\map}{\mathcal{T}}
\newcommand{\pmap}{\widetilde{\mathcal{T}}}
\title{Learning non-Gaussian spatial distributions via Bayesian transport maps with parametric shrinkage}
\author{Anirban Chakraborty\thanks{Department of Bioinformatics and Computational Biology, University of Texas MD Anderson Cancer Center} \and Matthias Katzfuss\thanks{Department of Statistics, University of Wisconsin--Madison. Corresponding author: \texttt{katzfuss@gmail.com}.}}
\date{}
\begin{document}
\maketitle
\begin{abstract}
Many applications, including climate-model analysis and stochastic weather generators, require learning or emulating the distribution of a high-dimensional and non-Gaussian spatial field based on relatively few training samples. To address this challenge, a recently proposed Bayesian transport map (BTM) approach consists of a triangular transport map with nonparametric Gaussian-process (GP) components, which is trained to transform the distribution of interest to a Gaussian reference distribution. To improve the performance of this existing BTM, we propose to shrink the map components toward a ``base'' parametric Gaussian family combined with a Vecchia approximation for scalability. The resulting ShrinkTM approach is more accurate than the existing BTM, especially for small numbers of training samples. It can even outperform the ``base'' family when trained on a single sample of the spatial field. We demonstrate the advantage of ShrinkTM through numerical experiments on simulated data and on climate-model output. 
\end{abstract}

{\small\noindent\textbf{Keywords:} Climate-model emulation; transport maps; autoregressive Gaussian processes; Vecchia approximation; generative modeling; maximin ordering; covariance estimation.}



\section{Introduction \label{sec:intro}}

Statistical inference in several practical applications, including climate-model emulation \citep[e.g.,][]{Castruccio2014,Nychka2018,Haugen2019} and ensemble-based data assimilation \citep[e.g.,][]{Houtekamer2016,Katzfuss2015b}, requires generative modeling of high-dimensional spatial distributions based on small number of replicates that are expensive to produce. Several methods are available for spatial inference with a small number of training replicates, including Gaussian processes (GP) with simple parametric covariance functions \citep[e.g.,][]{Cressie1993,Banerjee2004}, locally anisotropic Mat\'ern covariances \citep[e.g.,][]{Nychka2018,Wiens2020}, and copulas \citep[e.g.,][]{Graler2014}. However, many of these methods assume Gaussianity implicitly or explicitly in the model and do not scale to large datasets \citep[see][for a review]{Katzfuss2021}.

Generative machine-learning based approaches such as variational auto encoders, generative adversarial networks \citep[e.g.,][]{Goodfellow2016}, and normalizing flows \citep[e.g.,][]{Kobyzev2020normalizing} on the other hand 
typically require massive training data, and are often highly sensitive to tuning-parameter and network-architecture choices \citep[e.g.,][]{Arjovsky2017,Hestness2017,Mescheder2018}, and hence not directly applicable to low-data applications without additional application-specific techniques \citep[e.g.,][]{Kashinath2021}.

Transport maps constitute another approach for modeling continuous multivariate non-Gaussian distributions. Specifically, a triangular transport map can characterize the dependence structure of any continuous multivariate distribution by transforming the target distribution to a reference distribution (e.g., standard Gaussian) \citep[e.g.,][]{Marzouk2016, Katzfuss2021}. Recently, \citet{Katzfuss2021} introduced a Bayesian transport map (BTM), in which the components of the transport map are nonparametrically modeled as GPs. \citet{Wiemann2023BayesianFields} extended this approach to model multivariate non-Gaussian spatial fields. The BTM approach results in closed-form inference that quantifies uncertainty and avoids under- and over-fitting even when the number of training samples is small. For target distributions corresponding to spatial fields, \citet{Katzfuss2021} proposed priors that exploit the screening effect via suitable conditional-independence assumptions that guarantee computational scalability for very large datasets. The resulting sparse non-linear transport maps can be seen as a non-parametric and non-Gaussian generalization of Vecchia approximations \citep[e.g.,][]{Vecchia1988, Stein2004, Datta2016, Katzfuss2017a, Schafer2020}, which implicitly utilize a linear transport map given by a sparse inverse Cholesky factor. The BTM approach, however, still requires $\order(10)$ training samples to accurately estimate meaningful dependencies, and hence can be prohibitive in applications where obtaining even a moderate number of training ensembles can become quite expensive.

We propose a novel extension of the BTM approach of \citet{Katzfuss2021} that is specifically tailored for learning the distributions of large spatial fields when very limited training samples (often, only one) are available. This extension builds on \citet{Kidd2020}, who proposed Bayesian nonparametric inference on the inverse Cholesky factors of the GP covariance matrices and empirically studied their parametric regularization. The major contribution lies in defining a new set of prior distributions for the map components that first centers the mean and variance of each of the components toward the conditional means and variances of a parametric GP (e.g., with a Mat\'ern covariance). Next, it introduces some regularization factors in the prior distributions, which can be optimized to actively learn the amount of shrinkage towards the GP family. We approximate the conditional means and variances by leveraging Vecchia approximation \citep{Vecchia1988}. The resulting regularized BTM approach, which we call ShrinkTM, is fast and produces accurate inference for large spatial non-Gaussian fields even when trained with only $n=1$ sample. Consequently, it expands the application of the BTM approach introduced by \citet{Katzfuss2021} in many realistic applications.

The remainder of the paper is organized as follows. In Section \ref{sec:btmreview}, we review the Bayesian spatial transport map of \citet{Katzfuss2021}. In Section \ref{sec:methodology}, we introduce our regularization for the BTM approach, including the use of the Vecchia approximation for computing conditional means and variances of the base GP family. In Section \ref{sec:numerical}, we provide numerical comparisons of our methods to the existing BTM approach using simulated data and climate-model output. We conclude in Section \ref{sec:conclusion}. Proofs are provided in Appendix~\ref{app:proofs}.


\section{Review of Bayesian spatial transport maps \label{sec:btmreview}}

\subsection{Transport maps and regression\label{sec:tmreg}}

Our goal is to learn the joint probability distribution $p(\by)$ of a centered spatial field $\by = (y_1,\ldots,y_N)^\top$, where $y_i = y(\bs_i)$ is observed at location $\bs_i$. A transport map $\map: \mathbb{R}^N \to \mathbb{R}^N$ characterizes $p(\by)$ through a nonlinear transformation $\map$ of $\by$, such that $\map(\by)$ follows a simple reference distribution: $\map(\by) \sim \normal_N(\bm{0}, \bm{I}_N)$, where $\normal_N$ denotes an $N$-variate Gaussian distribution. Without loss of generality, $\map$ can have a lower triangular form \citep{Rosenblatt1952, Carlier2009} such that
\begin{equation}
\map(\by) = \begin{bmatrix*}[l] \map_1(y_1) \\ \map_2(y_1,y_2) \\ ~\,\vdots \\ \map_N(y_1,y_2,\ldots,y_N) \end{bmatrix*}, \label{eq:origtmap}
\end{equation}
with $\mathcal{T}_i$ being strictly monotone in the $i$-th argument. \citet{Katzfuss2021} have used this transport-map idea for generative modeling of high-dimensional spatial fields. Specifically, they assume map components $\map_i$ of the form,
\begin{equation}
    \map_i(\by_{1:i}) = (y_i - f_i(\by_{1:i-1}))/d_i, \qquad i=1,\ldots,N,
\end{equation}
for some $d_i \in \mathbb{R}^+$, $f_i: \mathbb{R}^{i-1} \rightarrow \mathbb{R}$ for $i=2,\ldots,N$, and $f_i(\by_{1:i-1}) \equiv 0$ for $i=1$. 
Consequently, the implied joint density $p(\by)$ can be factorized as
$
p(\by) = \prod_{i=1}^N \normal\big(y_i|f_i(\by_{1:i-1}),d_i\big),
$
and so the transport-map approach converts the problem of inferring the $N$-variate distribution of $\by$ into $N$ independent regressions of $y_i$ on $\by_{1:i-1}$ of the form
\begin{equation}
y_i = f_i(\by_{1:i-1}) + \epsilon_i, \quad \epsilon_i \sim \normal(0,d_i^2), \qquad i=1,\ldots,N.  \label{eq:origtmapreform}
\end{equation}

\subsection{Prior distributions}
In contrast to previous transport map literature \citep[e.g.,][]{Marzouk2016}, which assumes known parametric forms for $\map$, \citet{Katzfuss2021} assume a flexible, nonparametric prior on the map $\map$ by specifying independent conjugate Gaussian-process-inverse-Gamma priors for the $f_i$ and $d_i^2$. Specifically, for the conditional variances $d_i^2$, they assume inverse-Gamma distributions, 
\begin{equation}
\label{eq:dprior}
    d_i^2 \stackrel{ind.}{\sim} \mathcal{IG}(\alpha_i,\beta_i), \qquad \text{with } \alpha_i>1, \; \beta_i > 0, \qquad i=1,\ldots,N.
\end{equation}    
Conditional on $d_i^2$, each function $f_i$ is modeled as a Gaussian process (GP) with inputs $\by_{1:i-1}$,
\begin{equation}
    \label{eq:gp}
f_i | d_i \stackrel{ind.}{\sim} \GP(0,d_i^2 K_i), \qquad i=1,\ldots,N,
\end{equation}
with $K_i(\cdot,\cdot) = C_i(\cdot,\cdot)/E(d_i^2)$, $E(d_i^2) = \beta_i/(\alpha_i -1)$,
\begin{equation}
    \label{eq:kernel}
C_i(\by_{1:i-1},\by_{1:i-1}') = \by_{1:i-1}^\top\bQ_i\by_{1:i-1}^\prime + \sigma^2_i \, \rho(h_i(\by_{1:i-1}^\top, \by^\prime_{1:i-1})/\gamma), \qquad i=2,\ldots,N,
\end{equation}
where $\sigma_i \in \mathbb{R}^+_0$, $\gamma = \exp(\theta_\gamma)$ is a range parameter, $h_i^2(\by_{1:i-1},\by_{1:i-1}')$ = $(\by_{1:i-1}-\by_{1:i-1}')^\top\bQ_i(\by_{1:i-1}-\by_{1:i-1}')$ and $\rho$ is an isotropic correlation function such that $\rho(h_i(\by_{1:i-1},\by_{1:i-1})) =1$. The expression of prior covariance function $C_i(\cdot, \cdot)$ in \eqref{eq:kernel} takes into account variance from both linear and nonlinear components from the regression function $f_i$.
The degree of nonlinearity of $f_i$ is determined by $\sigma_i^2$, and indicates linear dependency of $f_i$ on $\by_{1:i-1}$ when $\sigma^2_i = 0$.

\subsection{Maximin ordering for large spatial fields \label{sec:maximin}}

\citet{Katzfuss2021} assume a maximum-minimum-distance (maximin) ordering of the locations $\bs_1,\ldots,\bs_N$, which is obtained by sequentially choosing each location to maximize the minimum distance to all previously ordered locations.
Specifically, the first index $i_1$ is chosen arbitrarily (e.g., $i_1=1$), and then the subsequent indices are selected as
$
i_j = \arg\max_{i \, \notin \, \mathcal{I}_{j}} \,\, \min_{j \, \in \, \mathcal{I}_{j}} \|\bs_i - \bs_j\|
$
for $j = 2, \ldots , N$, where $\mathcal{I}_{j} = \{i_1 , \ldots , i_{j-1}\}$. 
Define $c_i(k)$ as the index of the $k$th nearest (previously ordered) neighbor of the $i$th location (and so $\bs_{c_i(1)},\ldots,\bs_{c_i(4)}$ are indicated by ${\color{red}\mathbf{*}}$ in Figure \ref{fig:maxmin}).
\begin{figure}
\centering
	\begin{subfigure}{.24\textwidth}
	\centering
 	\includegraphics[width =.99\linewidth]{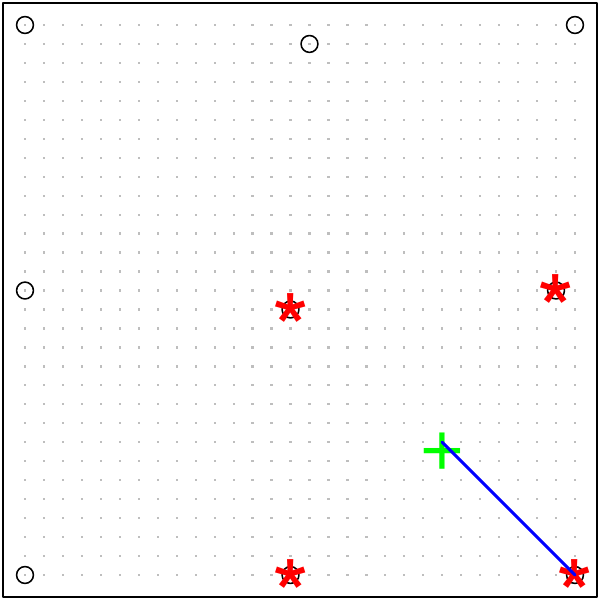}
	\caption{$i=10$}
	\label{fig:mm1}
	\end{subfigure}%
\hfill
\centering
	\begin{subfigure}{.24\textwidth}
	\centering
 	\includegraphics[width =.99\linewidth]{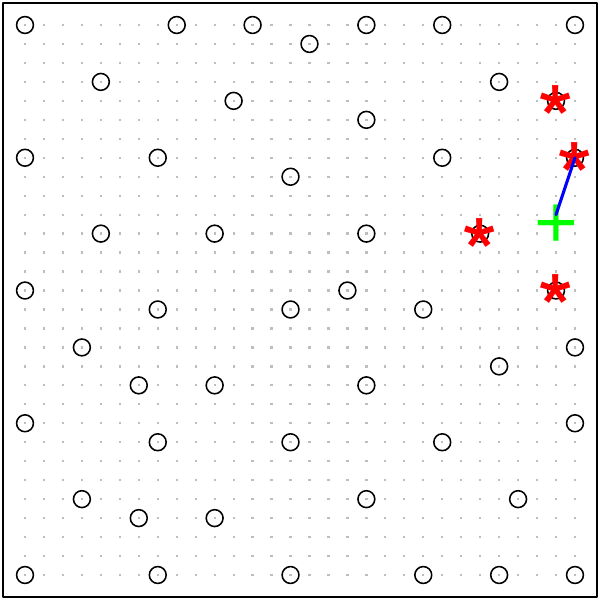}
	\caption{$i=50$}
	\label{fig:mm2}
	\end{subfigure}%
\hfill
\centering
	\begin{subfigure}{.24\textwidth}
	\centering
 	\includegraphics[width =.99\linewidth]{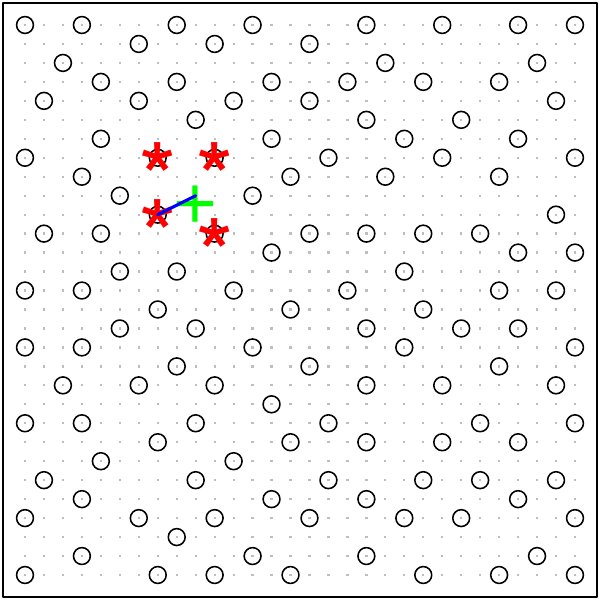}
	\caption{$i=131$}
	\label{fig:mm3}
	\end{subfigure}%
\hfill
	\begin{subfigure}{.24\textwidth}
	\centering
	\includegraphics[width =.99\linewidth]{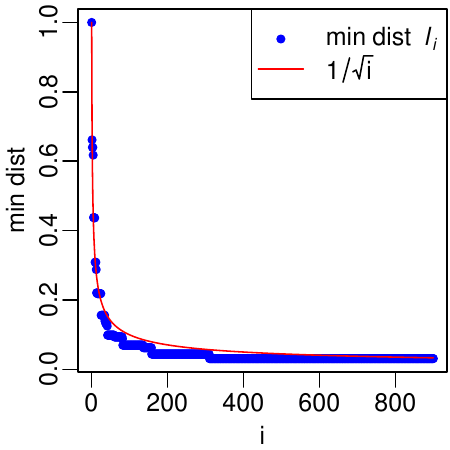}
	\caption{scale decay}
	\label{fig:scale}
	\end{subfigure}%
  \caption{Maximin ordering of locations on a grid (small gray points) of size $N=30 \times 30 = 900$ on $[0,1]^2$. (a)--(c): The $i$th ordered location (${\color{green}+}$), the previous $i-1$ locations (${\color{black}\circ}$), including the nearest $m=4$ neighbors (${\color{red}*}$ and the distance $\ell_i$ to the nearest neighbor ({\color{red}---}). (d): For $i=1,\ldots,N$, the length scales (i.e., minimum distances) decay as $\ell_i = i^{-1/\text{dim}}$, where dim denotes the dimension of the spatial domain. This figure closely follows Figure 2 in \citet{Katzfuss2021}.}
\label{fig:maxmin}
\end{figure}
The maximin ordering can be interpreted as a multiresolution decomposition into coarse scales early in the ordering and fine scales later in the ordering. In particular, the minimal pairwise distance $\ell_i = \| \bs_i - \bs_{c_i(1)} \|$ among the first $i$ locations of the ordering decays roughly as $\ell_i \propto i^{-1/\text{dim}}$, where $\text{dim}$ here is the dimension of the spatial domain (see Figure \ref{fig:scale}).

Assuming that the entries of $\by$ ordered according to maximin ordering, the $i$th regression in \eqref{eq:origtmapreform} can be viewed as a spatial prediction at location $\bs_i$ based on data at locations $\bs_1,\ldots,\bs_{i-1}$ that lie roughly on a regular grid with distance (i.e., scale) $\ell_i$. 

When the variables $y_1,\ldots,y_N$ are not associated with spatial locations or when Euclidean distance between the locations is not meaningful (e.g., nonstationary, multivariate, spatio-temporal, or functional data), the maximin and neighbor ordering can be carried out based on other distance metrics, such as $(1 - |\text{correlation}|)^{1/2}$ based on some guess or estimate of the correlation between variables \citep{Kidd2020, Kang2021}.

\subsection{Computation and inference for large spatial fields}
Inference based on \eqref{eq:origtmapreform} -- \eqref{eq:kernel} is computationally and statistically challenging for large $N$, as it requires inferring nonlinear functions $f_i$ in $\order(N)$ dimensions. Hence, \citet{Katzfuss2021} first assumed a maximin ordering of the spatial locations $s_1$, \ldots, $s_N$, and parameterized the relevance matrix $\bQ_i = \diag(q_{i,1}^2,\ldots,q_{i,i-1}^2)$ to decay exponentially with neighbor number $k$, as
\begin{equation}
q_{i,c_i(k)} = \begin{cases} \exp(-e^{\theta_q} k), & k \leq m^\prime,\\ 0, & k>m^\prime. \end{cases}
\label{eq:sparsity}    
\end{equation}
The sparsity parameter $m^\prime=\max \{k: \exp(-e^{\theta_q} k) \geq \varepsilon\}$ is determined by the data through the hyperparameter $\theta_q$, where $\varepsilon$ is a small fixed threshold (e.g., $\varepsilon=0.01$). The parametrization of $q_{i, c_i(k)}$ has been primarily considered by \citet{Katzfuss2021} and is motivated by exponential rates of screening for Gaussian processes derived from elliptic boundary-value problems \citep{Schafer2017,Schafer2020,Kang2024}. Consequently, $f_i(\by_{1:i-1})=f_i(\by_{g_{m^\prime}(i)})$ and its covariance kernel can be expressed as
\begin{equation}
    \label{eq:kernelreconstruct}
C_i(\by_{g_{m^\prime}(i)},\by_{g_{m^\prime}(i)}') = \by_{g_{m^\prime}(i)}^\top\bQ_i\by_{g_{m^\prime}(i)}^\prime + \sigma^2_i \, \rho(h_i(\by_{g_{m^\prime}(i)}^\top, \by^\prime_{g_{m^\prime}(i)})/\gamma), \qquad i=2,\ldots,N,
\end{equation}
where  $g_{m^\prime}(i) = \{ c_i(1),\ldots,c_i(m') \} \subset \{1, \ldots, i-1\}$ denotes the  $m^\prime$ nearest previously ordered neighbors. This results in a sparse $\GP$ regression of $f_i$ in \eqref{eq:origtmapreform} on $\by_{g_{m^{\prime}}(i)}$, which is easier to compute.

The prior distributions in equations \eqref{eq:dprior}-\eqref{eq:gp} and \eqref{eq:kernelreconstruct} depend on the hyperparameter vector $\bfpsi = (\alpha_1, \ldots, \alpha_N, \beta_1, \ldots, \beta_N, \bfkappa_1, \ldots, \bfkappa_N)$, where $\bfkappa_i$ are hyperparameters  of the covariance matrices $\bK$'s for the $\GP$'s on $f_i$'s. \citet{Katzfuss2021} have employed an empirical Bayes (EB) technique to estimate these hyperparameters. Specifically, they have shown that the \textit{screening effect} \citep[e.g.,][]{Stein2011} and \textit{near-Gaussianity} \citep[e.g.,][]{Schafer2021b, Katzfuss2021} along with the maximin ordering described in Section~\ref{sec:maximin} can be judiciously employed in the estimation process to simultaneously reduce the number of hyperparameters \citep[see][Sect. 2.1.1, for a review]{Wiemann2023BayesianFields} and decrease the computation time for large spatial fields. We will revisit relevant details discussing our methodology in Section \ref{sec:methodology}.

\section{Methodology: Shrinkage toward parametric covariances\label{sec:methodology}}

\subsection{Vecchia approximation of Gaussian distributions\label{sec:vecchia}}

Assume temporarily that $\by \sim \normal_N(\bfzero,\bfSigma)$ with known (spatial) covariance $\bfSigma$. In this setting, \citet{Vecchia1988} proposed to approximate the joint distribution $p(\by) = \prod_{i=1}^N p(y_i | \by_{1:i-1})$ by 
\begin{equation}
    \hat{p}(\by) = \prod_{i=1}^N p(y_i | \by_{g_m(i)}),
    \label{eq:vecchia}
\end{equation}
where $g_m(i) \subset \{1, \ldots, i-1\}$ is a subset of $m$ nearest neighbors ordered prior to $i$. The Vecchia approximation under maximin ordering has been shown to be highly accurate even for large $N$ (e.g., $N=10^5$) and relatively small $m$ (e.g., $m=30$) both numerically and theoretically in a variety of settings \citep[e.g.,][]{Datta2016,Katzfuss2017a, Katzfuss2018, Zilber2019, Jurek2020, Zhang2022,Kang2024}, due to the screening effect \citep[e.g.,][]{Stein2011}.

Due to the Gaussian assumption, the conditional distributions in \eqref{eq:vecchia} can be analytically expressed as:
\begin{equation}
    p(y_i | \by_{g_m(i)}) = \normal(y_i |\bfxi_i^\top\by_{g_m(i)},\tau^2_{i}),
    \label{eq:vecchiai}
\end{equation}
where $\bfxi_i^\top = \bfSigma_{i,g_m(i)} \bfSigma^{-1}_{g_m(i), g_m(i)}$, $\tau^2_{i}=\bfSigma_{i,i} - \bfSigma_{i,g_m(i)} \bfSigma^{-1}_{g_m(i), g_m(i)}\bfSigma_{g_m(i),i}$, and $\bfSigma_{\bj, \bk}$ denotes the submatrix of $\bfSigma$ with entries from row indices $\bj$ and column indices $\bk$ (with $\bj, \bk\subset \{1, 2, \ldots, N\}$).

Comparing \eqref{eq:vecchiai} to the expressions in Section~\ref{sec:tmreg}, we can see that if $\by\sim \normal(\bfzero, \bfSigma)$, both $f_i$ and $d_i^2$ can be expressed in closed form as $f_i(\by_{1:i-1}) = \bfxi_i^\top\by_{g_m(i)}$ and $d_i^2 = \tau^2_{i}$, after applying a Vecchia approximation to $\normal_N(\bfzero,\bfSigma)$. 

\subsection{Nonlinear prior distributions \label{sec:fdpriornew}}

Given the motivation in Section~\ref{sec:vecchia}, we now propose a methodology for learning the joint distribution $p(\by)$ with nonlinear and nonparametric dependence based on a small number of training samples by shrinking $p(\by)$ toward $\normal_N(\bfzero,\bfSigma)$ by shrinking $f_i(\by_{1:i-1})$ toward the implied $\bfxi_i^\top\by_{g_m(i)}$ and $d_i^2$ toward $\tau^2_{i}$, where $\bfSigma$ (and hence $\bfxi_i$ and $\tau^2_{i}$) are based on a parametric ``base'' covariance (e.g., Mat\'ern) that depends on parameters $\theta_p$. We suppress this dependence for now to ease notation and describe inference on $\theta_p$ in Section~\ref{sec:hyper}.

Specifically, for \eqref{eq:dprior}, we now assume $E(d_i^2) = \tau^2_{i}$ and $sd(d_i^2) = c_dE(d_i^2)$, where $c_d$ determines how much $d_i^2$ is shrunk toward $\tau^2_{i}$. Solving the inverse-Gamma moments $\bE(d_i^2) = \beta_i/ (\alpha_i - 1)$ and $sd(d_i^2) = \beta_i/(\alpha_i - 1)\sqrt{(\alpha_i - 2)}$ for $\alpha_i$ and $\beta_i$, we obtain
\begin{equation}
\alpha_i = (2 + 1/c_d^2), \qquad \beta_i = (1+1/c_d^2)\tau^2_{i}, \qquad i = 1, \ldots, N.
\label{eq:alphabetarepar}
\end{equation}

Conditional on the variance component $d_i^2$, we model each function $f_i$ using a GP:
\begin{equation}
    \label{eq:gpnew}
f_i | d_i \stackrel{ind.}{\sim} \GP(\bfxi_i^\top\by_{g_m(i)},d_i^2 K_i), \qquad i=1,\ldots,N,
\end{equation}
where $K_i(\cdot,\cdot) = C_i(\cdot,\cdot)/E(d_i^2)$. In contrast to \eqref{eq:gp}, where the regression function $f_i$'s are centered at zero, here we center our new prior distribution for $f_i$ at the conditional means $\bfxi_i^\top\by_{g_m(i)}$ obtained through the Vecchia approximation of $\bfSigma$. We write $C_i(\cdot, \cdot)$ as,
\begin{equation}
    \label{eq:kernelnew}
C_i(\by_{g_{m^\prime}(i)},\by_{g_{m^\prime}(i)}') = \sigma_0^2\by_{g_{m^\prime}(i)}^\top\bQ_i\by_{g_{m^\prime}(i)}^\prime + \sigma^2_i \, \rho(h_i(\by_{g_{m^\prime}(i)}^\top, \by^\prime_{g_{m^\prime}(i)})/\gamma), \qquad i=2,\ldots,N,
\end{equation}
where $\gamma$, $h_i$ and $\rho$ follow similar parameterization as in \eqref{eq:kernelreconstruct}. To obtain the sparsity parameter $m^\prime$ in \eqref{eq:sparsity}, we used $\varepsilon = 0.01$ for our numerical examples, which produced highly accurate inference and usually resulted in $m^\prime<10$. In contrast to \eqref{eq:kernelreconstruct}, we introduce a shrinkage parameter $\sigma^2_0$ in \eqref{eq:kernelnew} that determines the level of shrinkage towards a linear regression. 

It is important to note that we induce sparsity in the transport map through both assumptions \eqref{eq:vecchiai} and \eqref{eq:sparsity}. While $m^\prime$, the maximum number of nearest number for the nonlinear covariance kernel in \eqref{eq:kernelnew}, is determined from the data via $\bftheta_q$, we want the nearest-neighbor number $m$ in \eqref{eq:vecchiai} to be large enough for accurate Vecchia approximation of $\bfSigma$ but without jeopardizing computational efficiency, and so we use a fixed $m = 30$.

In addition, we also consider the same parameterization of $\sigma^2_i$ given by the former authors, i.e., 
$\sigma_i^2 = e^{\theta_{\sigma,1}} \ell_i^{\theta_{\sigma,2}}$, with $\ell_i$ as in Section~\ref{sec:maximin}, which allows the conditional distributions of $\by_{i:N}$ given $\by_{g_{m^\prime}(i)}$ to be increasingly Gaussian as $i$ increases, as a function of hyperparameters $\theta_{\sigma,1},\theta_{\sigma,2}$. This prior assumption is motivated by the behavior of stochastic processes with quasiquadratic loglikelihoods 
\citep{Katzfuss2021}. 

\begin{figure}
    \centering
    \includegraphics[width=0.5\linewidth]{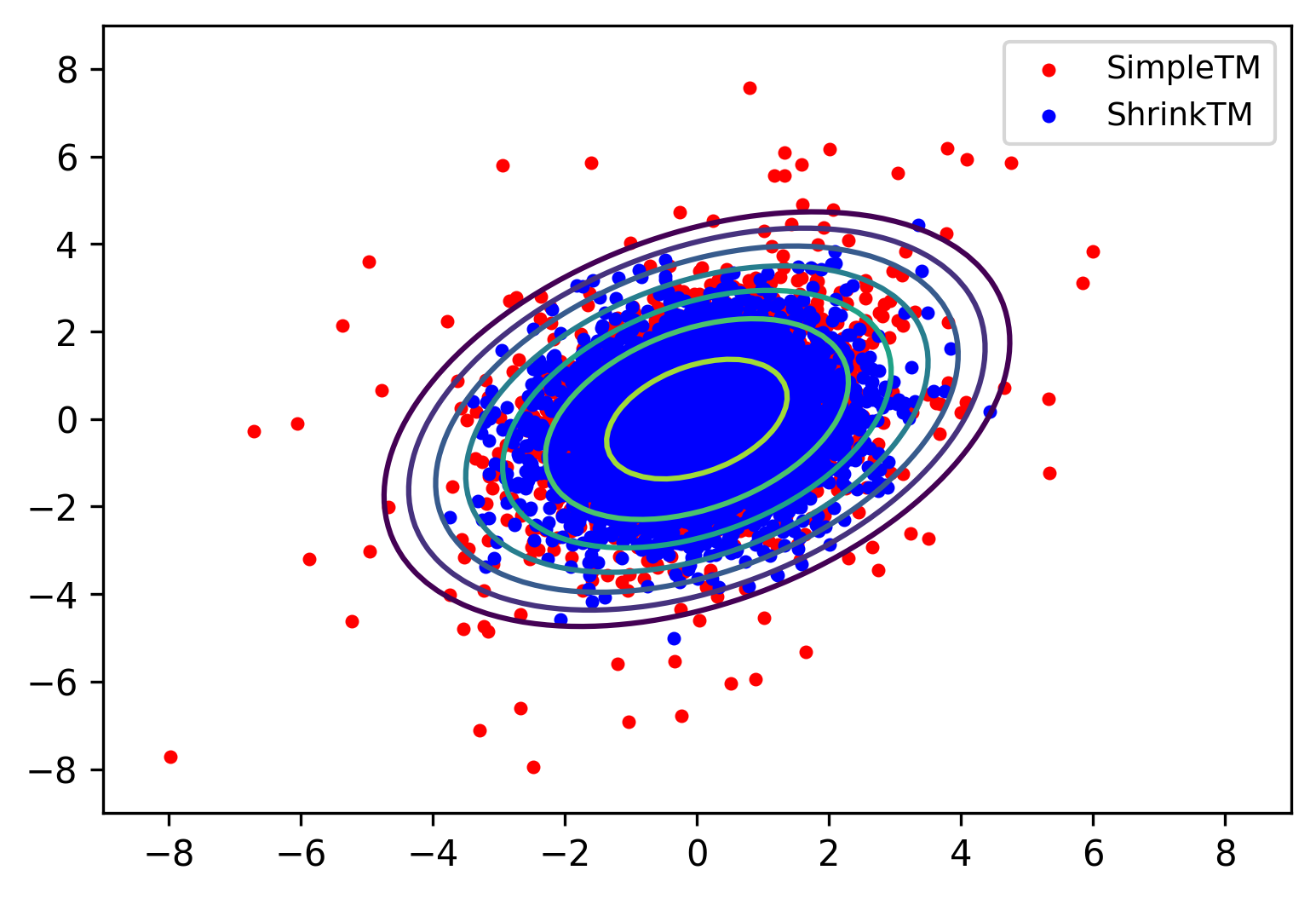}
    \caption{For two locations at distance 0.3 in the Gaussian simulation example in Figure~\ref{fig:lr900samples}, the true bivariate Gaussian distribution (contour lines), along with 5,000 samples each from the fitted existing Bayesian transport map (SimpleTM) outlined in Section \ref{sec:btmreview} (red) and our proposed ShrinkTM (blue), both trained on a single ($n=1$) sample.}
    \label{fig:contour_plot}
\end{figure}

Equations \eqref{eq:vecchiai}--\eqref{eq:kernelnew} imply a transport-map model that we call ShrinkTM, which can accurately learn the joint distribution even when trained with small number of samples. For example, Figure \ref{fig:contour_plot} shows that if we train both ShrinkTM and the existing transport-map approach (SimpleTM) outlined in Section \ref{sec:btmreview} on single simulated sample, ShrinkTM's parametric shrinkage allows it to capture the true distribution among two arbitrarily selected locations much more accurately than SimpleTM, which produces overly heavy tails.

\subsection{The posterior map \label{sec:invertiblemap}}

Now assume that we have observed $n$ independent training samples $\by^{(1)},\ldots,\by^{(n)}$ from the distribution in Section \ref{sec:btmreview} conditional on $\bf=(f_1,\ldots,f_N)$ and $\bd=(d_1,\ldots,d_N)$, such that 
$\by^{(j)} \stackrel{i.i.d.}{\sim} p(\by | \bf,\bd)$ with 
$\map(\by^{(j)}) \, | \, \bf,\bd \, \sim \normal_N(\bfzero,\bI_N)$, $j=1,\ldots,n$.
We assume that the samples are ordered according to the maximin ordering described in Section~\ref{sec:maximin} and combine the samples into an $n \times N$ data matrix $\bY$ whose $j$th row is given by $\by^{(j)}$. Then, for the regression in \eqref{eq:origtmapreform}, the responses $\by_i$ and the covariates $\bY_{g_m(i)}$ (and $\bY_{g_{m^\prime}(i)}$) are given by the $i$th and the $g_m(i)$ (and $g_{m^\prime}(i)$) columns of $\bY$, respectively. Below, let $\by^\star$ denote a new observation sampled from the same distribution, $\by^\star \sim p(\by | \bf,\bd)$, independently of $\bY$.

Based on the prior distribution for $\bf$ and $\bd$ in Section \ref{sec:fdpriornew}, we can now determine the posterior map $\pmap$ learned from the training data $\bY$, with $\bf$ and $\bd$ integrated out. This map is available in closed form and invertible:
\begin{proposition}
\label{prop:maps}
The transport map $\pmap$ from $\by^\star \sim p(\by|\bY)$ to 
$\bz^\star = \pmap(\by^\star) \sim \normal_N(\bfzero,\bI_N)$ is a triangular map with components
\begin{equation}
z_i^\star = \pmap_i(y_1^\star,\ldots,y_i^\star) = \Phi^{-1}\big( F_{2\tilde\alpha_i}\big( \hat d_i^{-1} (v_i(\by^\star_{1:i-1})+1)^{-1/2}(y_i^\star - \hat f_i(\by^\star_{1:i-1})) \big)\big), \quad i=1,\ldots,N, \label{eq:singlemap}
\end{equation}
where
$\tilde\alpha_i = \alpha_i + n/2$, 
$\tilde\beta_i = \beta_i + (\by_i - \bY_{g_m(i)}\bfxi_i)^\top \bG_i^{-1} (\by_i - \bY_{g_m(i)}\bfxi_i)/2$,
$\hat d_i^2=\tilde\beta_i/\tilde\alpha_i$,
$\bG_i = \bK_i + \bI_n$, $\bK_i= K_i(\bY_{g_{m^\prime}(i)},\bY_{g_{m^\prime}(i)}) =\big(K_i(\by_{g_{m^\prime}(i)}^{(j)},\by_{g_{m^\prime}(i)}^{(l)}) \big)_{j,l=1,\ldots,n}$,
\begin{align}
\hat f_i(\by^\star_{1:i-1}) & = \bG_i^{-1}\bY_{g_m(i)}\bfxi_i + K_i(\by^\star_{g_{m^\prime}(i)},\bY_{g_{m^\prime}(i)})\bG_i^{-1}\by_i, \label{eq:gppredmean}\\ 
v_i(\by^\star_{1:i-1}) & = K_i(\by^\star_{g_{m^\prime}(i)},\by^\star_{g_{m^\prime}(i)}) - K_i(\by^\star_{g_{m^\prime}(i)},\bY_{g_{m^\prime}(i)})\bG_i^{-1} K_i(\bY_{g_{m^\prime}(i)},\by^\star_{g_{m^\prime}(i)}), \label{eq:gppredvar}
\end{align}
for $i=2,\ldots,N$, $\hat f_1 = v_1 = 0$ for $i=1$, and $\Phi$ and $F_{\kappa}$ denote the cumulative distribution functions of the standard normal and the $t$ distribution with $\kappa$ degrees of freedom, respectively.
The inverse map $\pmap^{-1}$ can be evaluated at a given $\bz^\star$ by solving the nonlinear triangular system $\pmap(\by^\star) =\bz^\star$ for $\by^\star$, which can be expressed recursively as:
\begin{equation}
y_i^\star = \hat f_i(\by_{1:i-1}^\star) + F_{2\tilde\alpha_i}^{-1}(\Phi(z_i^\star))\, \hat d_i (v_i(\by_{1:i-1}^\star)+1)^{1/2}, \quad i=1,\ldots,N. \label{eq:invmap}
\end{equation}
\end{proposition}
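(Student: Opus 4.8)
The plan is to exploit the conditional-independence structure of the prior to reduce the claim to $N$ separate conjugate Gaussian-process regressions, each of which has a location-scale Student-$t$ posterior predictive; formula \eqref{eq:singlemap} then follows from the probability integral transform, and invertibility from strict monotonicity of each component in its last argument. First I would establish the triangular/factorized structure: since the $(f_i,d_i)$ are independent across $i$ under the prior and the likelihood factorizes as $p(\bY\mid\bf,\bd)=\prod_{i=1}^N\prod_{j=1}^n\normal\big(Y_{ji}\mid f_i(\by^{(j)}_{1:i-1}),d_i^2\big)$, the posterior $p(\bf,\bd\mid\bY)=\prod_i p(f_i,d_i\mid\bY)$ factorizes as well, with $p(f_i,d_i\mid\bY)$ depending on $\bY$ only through the $i$th column $\by_i$ and the neighbor columns $\bY_{g_m(i)},\bY_{g_{m^\prime}(i)}$. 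Hence $p(\by^\star\mid\bY)=\prod_{i=1}^N p(y_i^\star\mid\by^\star_{1:i-1},\bY)$, each factor being the $i$th posterior predictive evaluated at the covariates $\by^\star_{g_m(i)},\by^\star_{g_{m^\prime}(i)}\subset\by^\star_{1:i-1}$; this already gives that $\pmap$ is triangular.

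Next, fixing $i$, I would carry out the conjugate analysis in two stages. Conditional on $d_i^2$, marginalize the GP: the finite-dimensional vector $\big(f_i(\bY_{g_{m^\prime}(i)}),f_i(\by^\star_{g_{m^\prime}(i)})\big)$ is jointly Gaussian with mean determined by the GP mean function $\by_{1:i-1}\mapsto\bfxi_i^\top\by_{g_m(i)}$ and covariance $d_i^2K_i$, and adding the independent $\normal(\bfzero,d_i^2\bI_n)$ observation noise to $\by_i$ gives the joint Gaussian of $(\by_i,y_i^\star)$ given $d_i^2$. Standard Gaussian conditioning, with $\bG_i=\bK_i+\bI_n$ arising as the marginal covariance of $\by_i$, yields $y_i^\star\mid\by_i,d_i^2\sim\normal\big(\hat f_i(\by^\star_{1:i-1}),\,d_i^2(v_i(\by^\star_{1:i-1})+1)\big)$ with $\hat f_i,v_i$ as in \eqref{eq:gppredmean}--\eqref{eq:gppredvar}, together with the marginal $\by_i\mid d_i^2\sim\normal_n\big(\bY_{g_m(i)}\bfxi_i,d_i^2\bG_i\big)$. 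Combining this marginal with the $\mathcal{IG}(\alpha_i,\beta_i)$ prior of \eqref{eq:dprior} via normal--inverse-gamma conjugacy gives the posterior $d_i^2\mid\by_i\sim\mathcal{IG}(\tilde\alpha_i,\tilde\beta_i)$ with $\tilde\alpha_i,\tilde\beta_i$ as stated. Integrating $d_i^2$ out of the resulting normal--inverse-gamma pair turns the Gaussian predictive into a location-scale Student-$t$: conditionally on $\by^\star_{1:i-1},\bY$, $\big(y_i^\star-\hat f_i(\by^\star_{1:i-1})\big)\big/\big(\hat d_i(v_i(\by^\star_{1:i-1})+1)^{1/2}\big)\sim t_{2\tilde\alpha_i}$ with $\hat d_i^2=\tilde\beta_i/\tilde\alpha_i$.

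Since $F_{2\tilde\alpha_i}$ is the CDF of $t_{2\tilde\alpha_i}$, the previous display implies $F_{2\tilde\alpha_i}\big(\hat d_i^{-1}(v_i+1)^{-1/2}(y_i^\star-\hat f_i)\big)\mid\by^\star_{1:i-1},\bY\sim\mathrm{Unif}(0,1)$, hence $z_i^\star=\Phi^{-1}(\cdot)$ is conditionally standard normal; as this holds for every $i$ irrespective of the conditioning values, the chain rule gives $\bz^\star=\pmap(\by^\star)\sim\normal_N(\bfzero,\bI_N)$. For invertibility, fix $\by^\star_{1:i-1}$: then $\hat f_i,v_i,\hat d_i$ are fixed and $y_i^\star\mapsto\pmap_i(\by^\star_{1:i})$ is the composition of an affine map with positive slope $\hat d_i^{-1}(v_i+1)^{-1/2}$, the strictly increasing bijection $F_{2\tilde\alpha_i}\colon\mathbb{R}\to(0,1)$, and $\Phi^{-1}\colon(0,1)\to\mathbb{R}$, hence a strictly increasing bijection of $\mathbb{R}$. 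Inverting these three maps in turn, and recursing from $i=1$ upward so that $\hat f_i,v_i,\hat d_i$ are already determined by $\by^\star_{1:i-1}$ at step $i$, recovers exactly \eqref{eq:invmap}.

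The main obstacle is the bookkeeping in the marginalization step: one must keep separate the two conditioning neighborhoods — $g_m(i)$, of fixed size $m=30$, entering the linear prior mean through $\bfxi_i$, versus $g_{m^\prime}(i)$, data-determined, entering the nonlinear kernel $K_i$ — and, crucially, observe that the GP covariance $d_i^2K_i$ and the noise covariance $d_i^2\bI_n$ share the single scale $d_i^2$. This last point is precisely what makes the normal--inverse-gamma conjugacy close in the second stage and produces a Student-$t$ with exactly $2\tilde\alpha_i$ degrees of freedom, rather than a less tractable scale mixture.
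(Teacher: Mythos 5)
Your proposal is correct and follows essentially the same route as the paper: conjugate normal--inverse-gamma analysis of each of the $N$ independent regressions, yielding a location-scale $t_{2\tilde\alpha_i}$ posterior predictive for $y_i^\star$ given $\by^\star_{1:i-1}$ and $\bY$, followed by the probability integral transform for \eqref{eq:singlemap} and componentwise monotonicity plus recursion for \eqref{eq:invmap}. The only (immaterial) difference is bookkeeping: you integrate $f_i$ out first and then apply inverse-gamma conjugacy to $d_i^2$ via the marginal $\by_i\mid d_i^2\sim\normal_n(\bY_{g_m(i)}\bfxi_i,d_i^2\bG_i)$, whereas the paper states the joint NIG posterior of $(\bf_i,d_i)$ and then invokes GP regression for the predictive.
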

Propositions \ref{prop:maps} and \ref{prop:lik} (see below) closely follow \citet{Katzfuss2021}, but due to our newly designed prior distributions in Section \ref{sec:fdpriornew}, we obtain posterior maps with different parameters, as detailed in the proofs in Appendix~\ref{app:proofs}.

Computation of $\pmap_i$ requires $\order(n^3 + m^\prime n^2)$ time for computing and decomposing the $n \times n$ matrix $\bG_i$ and $\order(m^3)$ time for computing the Vecchia approximation, for each $i=1,\ldots,N$. The $N$ components of $\pmap$ can be computed completely in parallel. 

\subsection{Hyperparameters \label{sec:hyper}}

\begin{algorithm}[t]
\caption{Estimation of the spatial transport map}
 \KwInput{Data matrix $\bY^{n \times N} = (\by^{(1)}, \ldots, \by^{(n)})^\top$.}
\KwResult{Trained transport map $\pmap_{\hat{\bftheta}}$.}
\begin{algorithmic}[1]
\STATE Order $y_1,\ldots,y_N$ in maximin ordering and compute scales $\ell_i$ and nearest-neighbor indices $c_i(1),\ldots,c_i(m_{\text{max}})$ (e.g., $m_{\text{max}}=30$) for each $i=1,\ldots,N$ (see Section \ref{sec:maximin})
\STATE Compute $\hat\bftheta = \arg\max_{\bftheta} \log p_{\bftheta}(\bY)$ via stochastic gradient ascent, where $ p_{\bftheta}(\bY)$ is given in \eqref{eq:intlik}.
\STATE Use fitted map to generate new samples or to find score of an observation. A new sample can be generated using $\by^\star=\pmap_{\hat{\bftheta}}^{-1}(\bz^\star)$ using \eqref{eq:invmap} based on $\bz^\star \sim \normal_N(\bfzero,\bI_N)$. Score can be calculated using \eqref{eq:tstar}.
\end{algorithmic}
\label{alg:inf}
\end{algorithm}

In the prior distributions of $f_i$ and $d_i$ in Section \ref{sec:fdpriornew}, $\alpha_i$, $\beta_i$, $m$, $K_i$ depend on a vector $\bftheta = (\bftheta_p,c_d,\theta_{\sigma,1},\theta_{\sigma,2},\theta_q)$ of hyperparameters, where $\bftheta_p$ are parameters determining the parametric covariance $\bfSigma$, $\theta_q$ determines decay and $m^\prime$, and $c_d,\theta_{\sigma,1},\theta_{\sigma,2}$ determine the strength of shrinkage toward (a Vecchia approximation with conditioning-set size $m$ of) $\normal(\bfzero,\bfSigma)$. We can write in closed form the integrated likelihood $p_{\bftheta}(\bY)$, where $\bf$ and $\bd$ have been integrated out. 

\begin{proposition}
\label{prop:lik}
The integrated likelihood is
\begin{equation}
    p_{\bftheta}(\bY) \textstyle \propto \prod_{i=1}^N \big( \, |\bG_i|^{-1/2} \times ({\beta_i^{\alpha_i}}/{\tilde\beta_i^{\tilde\alpha_i}}) \times {\Gamma(\tilde\alpha_i)}/{\Gamma(\alpha_i)} \, \big),
    \label{eq:intlik}
\end{equation}
where $\Gamma(\cdot)$ denotes the gamma function, and $\tilde\alpha_i$, $\tilde\beta_i$, $\bG_i$ are defined in Proposition \ref{prop:maps}.
\end{proposition}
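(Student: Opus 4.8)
The plan is to exploit the triangular structure of the map, which by construction factorizes the joint density as $p(\by) = \prod_{i=1}^N \normal(y_i \mid f_i(\by_{1:i-1}), d_i^2)$, together with the fact that the priors on $(f_i, d_i^2)$ are independent across $i$. Since the rows of $\bY$ are i.i.d.\ given $\bf,\bd$ and the columns are ordered by maximin ordering, the full-sample likelihood with $\bf,\bd$ integrated out factorizes as $p_{\bftheta}(\bY) = \prod_{i=1}^N p_{\bftheta}(\by_i \mid \bY_{1:i-1})$, where each factor is a conjugate Gaussian-process regression of the response column $\by_i$ on the covariate columns $\bY_{g_m(i)}$ (linear/mean part) and $\bY_{g_{m'}(i)}$ (kernel part), with $(f_i, d_i^2)$ integrated out. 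It therefore suffices to evaluate one such factor.

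Fix $i$ and condition on $\bY_{1:i-1}$, so that $\bK_i$, $\bfxi_i$, $\tau_i^2$, $\alpha_i$, $\beta_i$ are all fixed. Let $\bf_i$ denote the vector $(f_i(\by_{1:i-1}^{(1)}),\ldots,f_i(\by_{1:i-1}^{(n)}))^\top$. From the GP prior \eqref{eq:gpnew}, its prior mean at the training inputs is $\bY_{g_m(i)}\bfxi_i$, so $\bf_i \mid d_i^2 \sim \normal(\bY_{g_m(i)}\bfxi_i, d_i^2 \bK_i)$, while $\by_i \mid \bf_i, d_i^2 \sim \normal(\bf_i, d_i^2 \bI_n)$ by \eqref{eq:origtmapreform}. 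Marginalizing out $\bf_i$ gives $\by_i \mid d_i^2 \sim \normal(\bY_{g_m(i)}\bfxi_i,\, d_i^2 \bG_i)$ with $\bG_i = \bK_i + \bI_n$, hence
\[
p(\by_i \mid d_i^2) = (2\pi d_i^2)^{-n/2}\,|\bG_i|^{-1/2}\exp\!\Big(-\tfrac{1}{2d_i^2}(\by_i - \bY_{g_m(i)}\bfxi_i)^\top \bG_i^{-1}(\by_i - \bY_{g_m(i)}\bfxi_i)\Big).
\]
Recognizing the quadratic form as $2(\tilde\beta_i - \beta_i)$ from the definition of $\tilde\beta_i$ in Proposition~\ref{prop:maps}, the exponent is $-(\tilde\beta_i - \beta_i)/d_i^2$.

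Next I would multiply by the inverse-Gamma prior density $\mathcal{IG}(d_i^2 \mid \alpha_i,\beta_i)$ of \eqref{eq:dprior} and integrate over $d_i^2 \in (0,\infty)$. Collecting powers, the integrand is proportional to $(d_i^2)^{-(\alpha_i + n/2)-1}\exp(-\tilde\beta_i/d_i^2)$, an unnormalized $\mathcal{IG}(\tilde\alpha_i,\tilde\beta_i)$ density with $\tilde\alpha_i = \alpha_i + n/2$, whose integral is $\Gamma(\tilde\alpha_i)\tilde\beta_i^{-\tilde\alpha_i}$ (a change of variables $u \mapsto 1/u$ reduces it to a gamma integral). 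Carrying the constants $\beta_i^{\alpha_i}/\Gamma(\alpha_i)$ from the prior and $(2\pi)^{-n/2}|\bG_i|^{-1/2}$ from the likelihood yields
\[
p_{\bftheta}(\by_i \mid \bY_{1:i-1}) = (2\pi)^{-n/2}\,|\bG_i|^{-1/2}\,\frac{\beta_i^{\alpha_i}}{\Gamma(\alpha_i)}\,\frac{\Gamma(\tilde\alpha_i)}{\tilde\beta_i^{\tilde\alpha_i}}.
\]
Taking the product over $i=1,\ldots,N$ and absorbing the $\bftheta$-independent factor $(2\pi)^{-nN/2}$ into the proportionality constant gives \eqref{eq:intlik}; for $i=1$ one checks that $\bK_1 = 0$, $\bG_1 = \bI_n$, and the prior mean vanishes, so the formula still applies.

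I do not expect a serious obstacle: the core computation is standard normal--inverse-Gamma conjugacy. The one point requiring care is the conditioning bookkeeping — justifying that under the maximin ordering and triangular map the full-sample integrated likelihood is exactly the product of the per-coordinate conjugate-regression marginal likelihoods, and that within coordinate $i$ the covariate columns $\bY_{g_m(i)}$, $\bY_{g_{m'}(i)}$ may be held fixed while integrating out $(f_i,d_i^2)$. This follows because the prior on $(f_i,d_i^2)$ is independent of $(f_j,d_j^2)$ for $j<i$ and the regression \eqref{eq:origtmapreform} conditions on $\by_{1:i-1}$. A secondary check is that the GP prior mean of \eqref{eq:gpnew} evaluated at the training inputs is precisely the $\bY_{g_m(i)}\bfxi_i$ appearing in $\tilde\beta_i$, which is immediate from the definitions.
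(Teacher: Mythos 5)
Your proposal is correct and follows essentially the same route as the paper: factorize the integrated likelihood over coordinates, marginalize $\bf_i$ to get $\by_i \mid d_i^2,\bY_{1:i-1} \sim \normal_n(\bY_{g_m(i)}\bfxi_i, d_i^2\bG_i)$, and then integrate out $d_i^2$ under the inverse-Gamma prior. The only cosmetic difference is that you carry out the inverse-Gamma integral explicitly (and handle the nonzero prior mean and $i=1$ case a bit more carefully), whereas the paper invokes the standard normal--inverse-Gamma conjugacy to write each factor as a multivariate $t_{2\alpha_i}$ density and then simplifies.
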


For our numerical results, we have followed the strategy of \citet{Katzfuss2021} and have employed the empirical Bayesian approach, because it is fast and preserves the closed-form map properties in Section \ref{sec:invertiblemap}. Once optimized, the resulting posterior transport map can be used to draw inference on the training data by drawing new samples or calculating the score function (as in Algorithm \ref{alg:inf}).

\begin{figure}[!htbp]
    \centering
    \includegraphics[width=\linewidth]{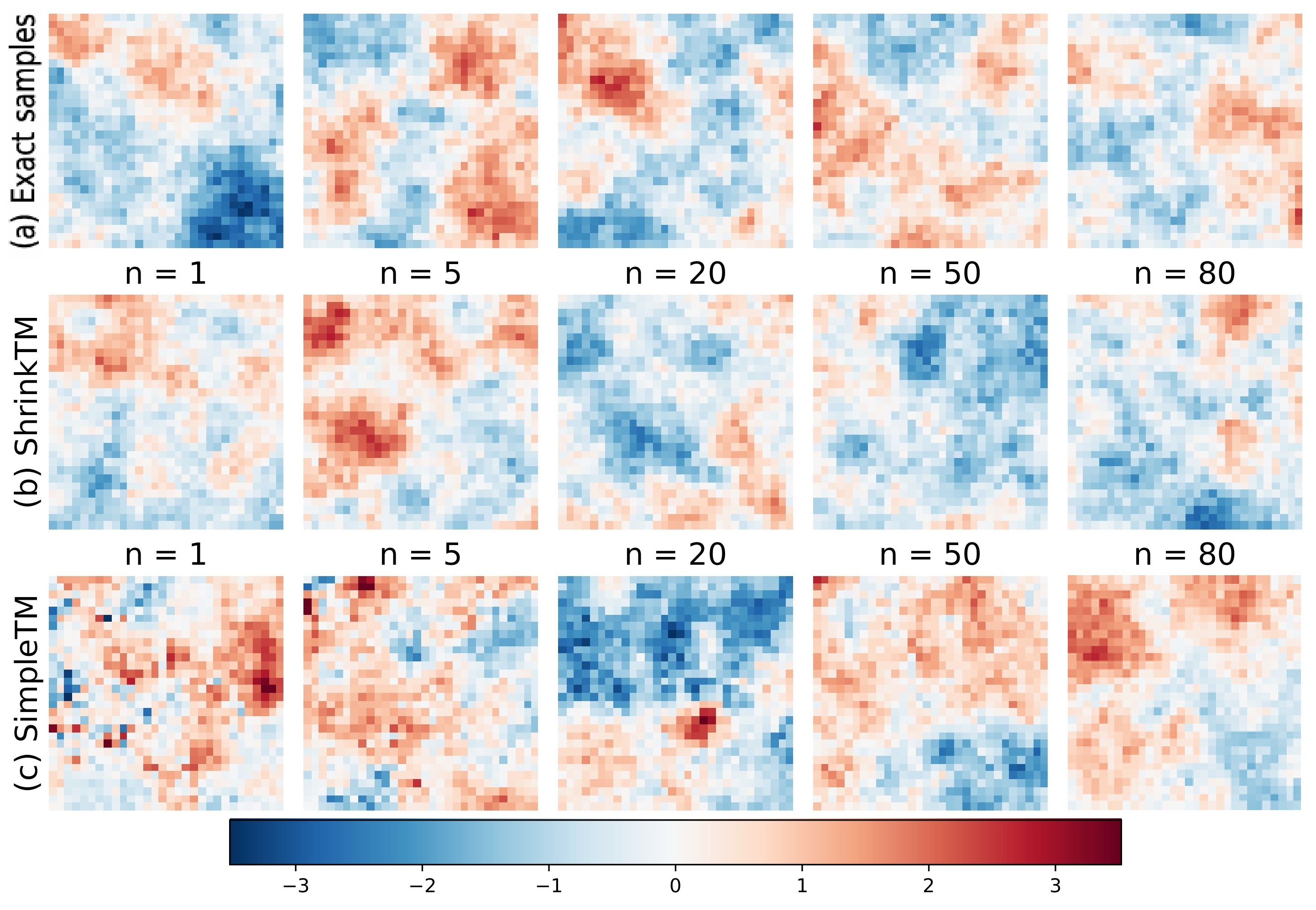}
    \caption{Exact samples from LR900 (top) and (independent) samples from ShrinkTM (middle) and SimpleTM (bottom) trained on $n$ samples from LR900}
    \label{fig:lr900samples}
\end{figure}

\section{Numerical results\label{sec:numerical}}

\subsection{Simulation experiments \label{sec:simulation}}
For our numerical experiments, we consider two separate simulation scenarios on a regular spatial grid of $N = 30\times 30 = 900$ locations in a unit square that were previously considered by \citet{Katzfuss2021}.

\textbf{LR900:} $\by \sim \normal_{900}(0, \bV)$, where $\bV$  is an exponential covariance with unit variance and range parameter 0.3 (as in Figure \ref{fig:lr900samples}(a)). This data generating mechanism can also be described as a linear map $f_L(\by_{1:i-1}) = \sum_{k=1}^{i-1}b_{i,k}y_{c_i(k)}$, where the $b_{i,k}y_{c_i(k)}$ are coefficients of the conditional means (as in \eqref{eq:vecchiai}) computed from the covariance matrix $\bV$.

\textbf{NR900:} A sine function is added to the map components $f_i^\text{NL}(\by_{1:i-1}) = f_i^\text{L}(\by_{1:i-1}) + 2 \sin(4(b_{i,1} y_{c_i(1)}+b_{i,2} y_{c_i(2)}))$, where the $b_{i,k}y_{c_i(k)}$s are coefficients of the conditional means computed from $\bfSigma$ used in LR900 (as in Figure \ref{fig:nr900samples}(a)).

We consider both the existing BTM approach of \citet{Katzfuss2021} (which we name \textbf{SimpleTM}) and our new version (i.e., \textbf{ShrinkTM}). For each of the two simulation settings, we randomly sample $n$ training samples and train both SimpleTM and ShrinkTM. We vary the value of $n$ and repeat this experiment 10 times for each value of $n$. For all the training tasks, we use default initial values $\bftheta = (\bftheta_p,c_d,\theta_{\sigma,1},\theta_{\sigma,2},\theta_q) = (2.0, 0.0, 0.0, 0.0, -1.0)$.
We perform the training of both SimpleTM and ShrinkTM in Python using the stochastic-gradient-descent-based Adam optimization algorithm from the \texttt{PyTorch} 2.5.1 library. We set the initial learning rate of the Adam optimizer to be 0.01 and adjust it using the cosine annealing rule in the \texttt{PyTorch} library. The remaining tuning parameters of the optimizer are fixed at their default values. We also compare to a \textbf{MatCov} approach that refers to a Gaussian process with an isotropic Mat\'ern covariance (which is also used as the base covariance in ShrinkTM), with the three hyperparameters inferred via maximum likelihood estimation.
\begin{figure}
    \centering
    \includegraphics[width=\linewidth]{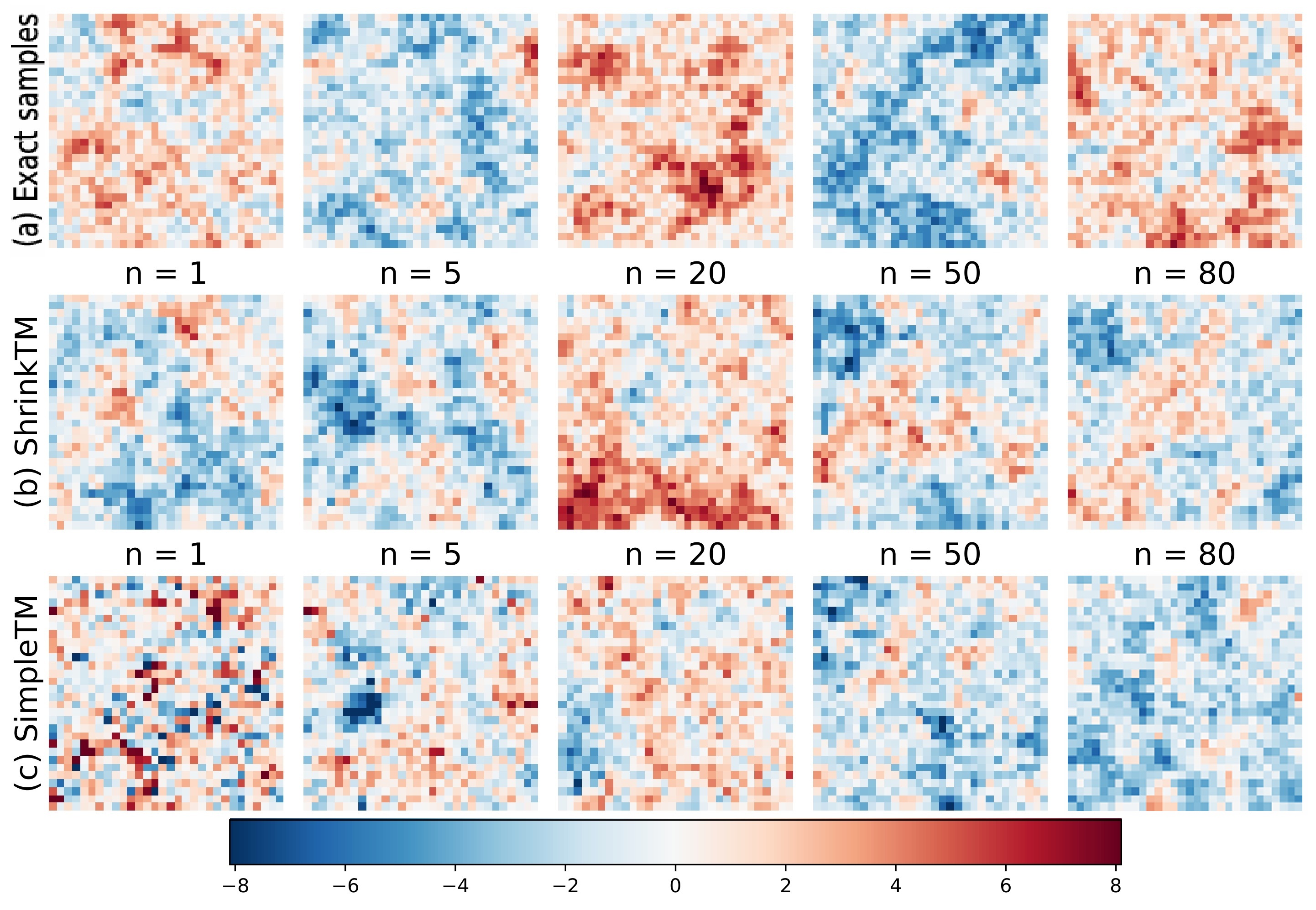}
    \caption{Exact samples from NR900 (top) and (independent) samples from ShrinkTM (middle) and SimpleTM (bottom) trained on $n$ samples from NR900}
    \label{fig:nr900samples}
\end{figure}

Figures \ref{fig:lr900samples} and \ref{fig:nr900samples} show samples generated from SimpleTM and ShrinkTM for different training sizes for LR900 and NR900, respectively. We can see that even for $n=1$ training samples, ShrinkTM excels in capturing the long- and short-range dependencies due to better regularization through the conditional means and variances, whereas SimpleTM struggles. We can furthermore see in Figure~\ref{fig:simshrinkageresults}(a) that for LR900, the shrinkage factor $c_d$ lies between 0.1 and 0.25 for $n=1$ with the average value of $c_d$ around 0.15, indicating the efficiency of ShrinkTM. As $c_d < 1$ means $sd(d_i) < E(d_i)$, this suggests a moderately strong shrinkage of the variance components $d_i$ toward the corresponding values $\tau_i(\hat\theta_p)$'s (as discussed in Section~\ref{sec:fdpriornew}). The average of $c_d$ decreases for large training sample sizes, indicating a stronger shrinkage. On the other hand, since NR900 has a modified dependence structure relative to an exponential covariance, we can see that $c_d$ varies for different sample sizes and automatically decides the amount of shrinkage from the training data.

\begin{figure}
    \centering
    \includegraphics[width=\linewidth]{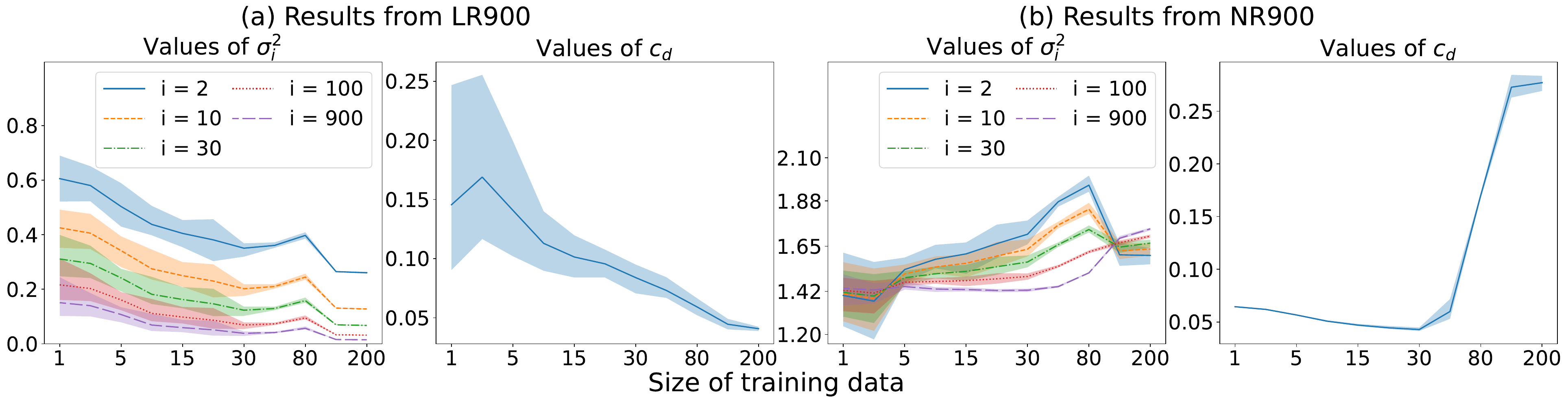}
    \caption{Results of several parameters from simulation experiments described in Section~\ref{sec:simulation}. First two columns represent values of $\sigma_i^2$ and $c_d$ for the \textbf{LR900} experiment, while the second two columns represent values of the $\sigma_i^2$ and $c_d$ for \textbf{NR900} experiment.}
    \label{fig:simshrinkageresults}
\end{figure}

\begin{figure}
    \centering
    \includegraphics[width=\linewidth]{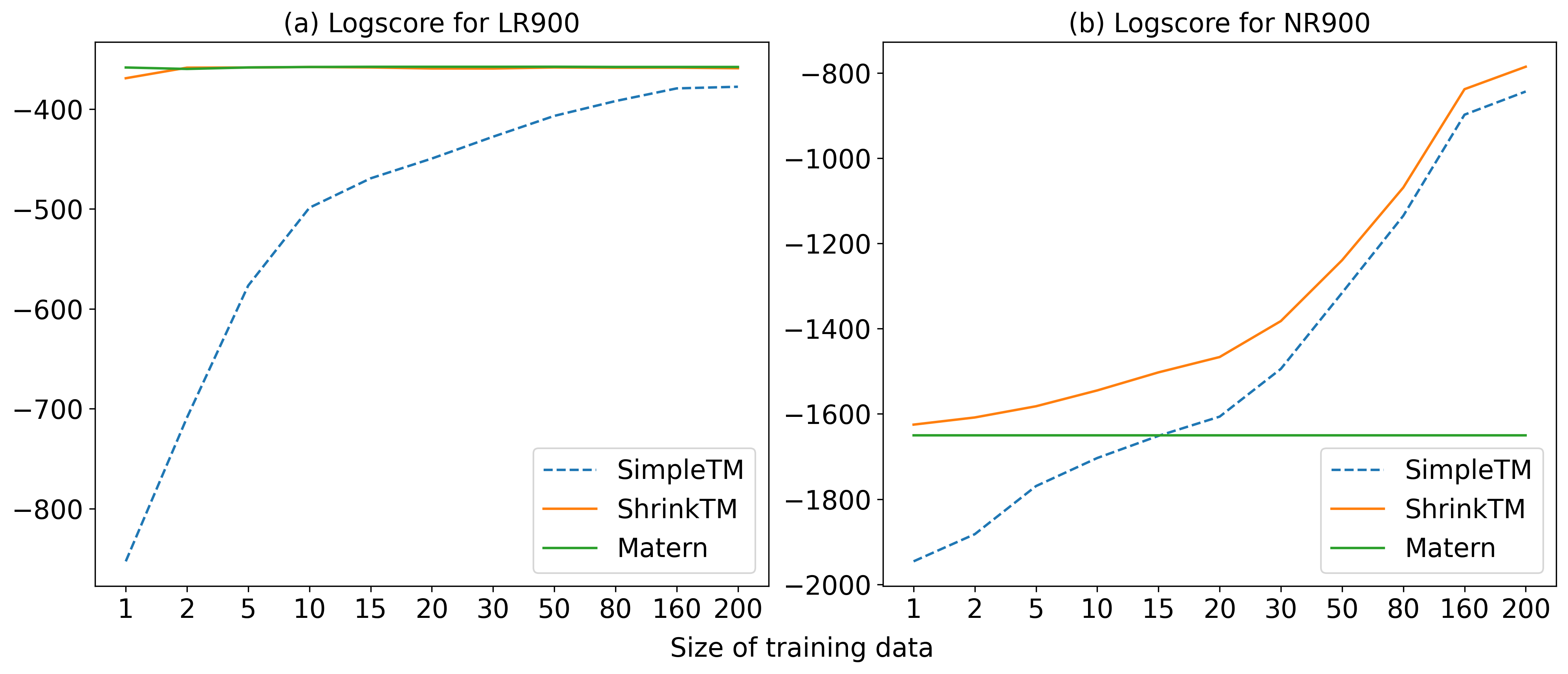}
    \caption{Logarithmic scores for \textbf{LR900} (left) and \textbf{NR900} (right) for varying sample sizes.}
    \label{fig:simdatalogscores}
\end{figure}

To further assess the accuracy of these two methods, we generate test samples from the two simulation scenarios, and we compare average logarithmic score (higher is better) of SimpleTM, ShrinkTM, and MatCov in Figure~\ref{fig:simdatalogscores}. The log-score \citep[e.g.,][]{Gneiting2014} approximates the negative Kullback-Leibler divergence between the true and approximated distribution up to an additive constant. For LR900, the log-score of ShrinkTM is much higher than that of SimpleTM for very small numbers of training samples $n$ (as seen in Figure~\ref{fig:simdatalogscores}(a)), and despite its flexible nonparametric structure, it matches the accuracy of the correctly specified parametric ``base" MatCov (which only needs to estimate three parameters) when trained on only $n=2$ training samples. Even when data is being generated from NR900, ShrinkTM consistently outperforms SimpleTM (as seen in Figure~\ref{fig:simdatalogscores}(b)). Average training time of ShrinkTM was also comparable with that of SimpleTM: On the uniform grid of $N=900$ points, on average, ShrinkTM took 49 seconds and SimpleTM took 7 seconds to train on $n=1$ replicate, but ShrinkTM and SimpleTM took 15 and 13.5 minutes, respectively, when training on $n=200$ samples. This is in line with the time complexities of $\order(Nm^3)$ for the Vecchia approximation (only in ShrinkTM) and $\order(N(n^3 + m^\prime n^2))$ for computing and decomposing the $\bG_i$ (for both ShrinkTM and SimpleTM), where the latter increases and starts to dominate for increasing $n$.

\subsection{Climate data application\label{sec:climateapplication}}

\begin{figure}
    \centering
    \includegraphics[width=\linewidth]{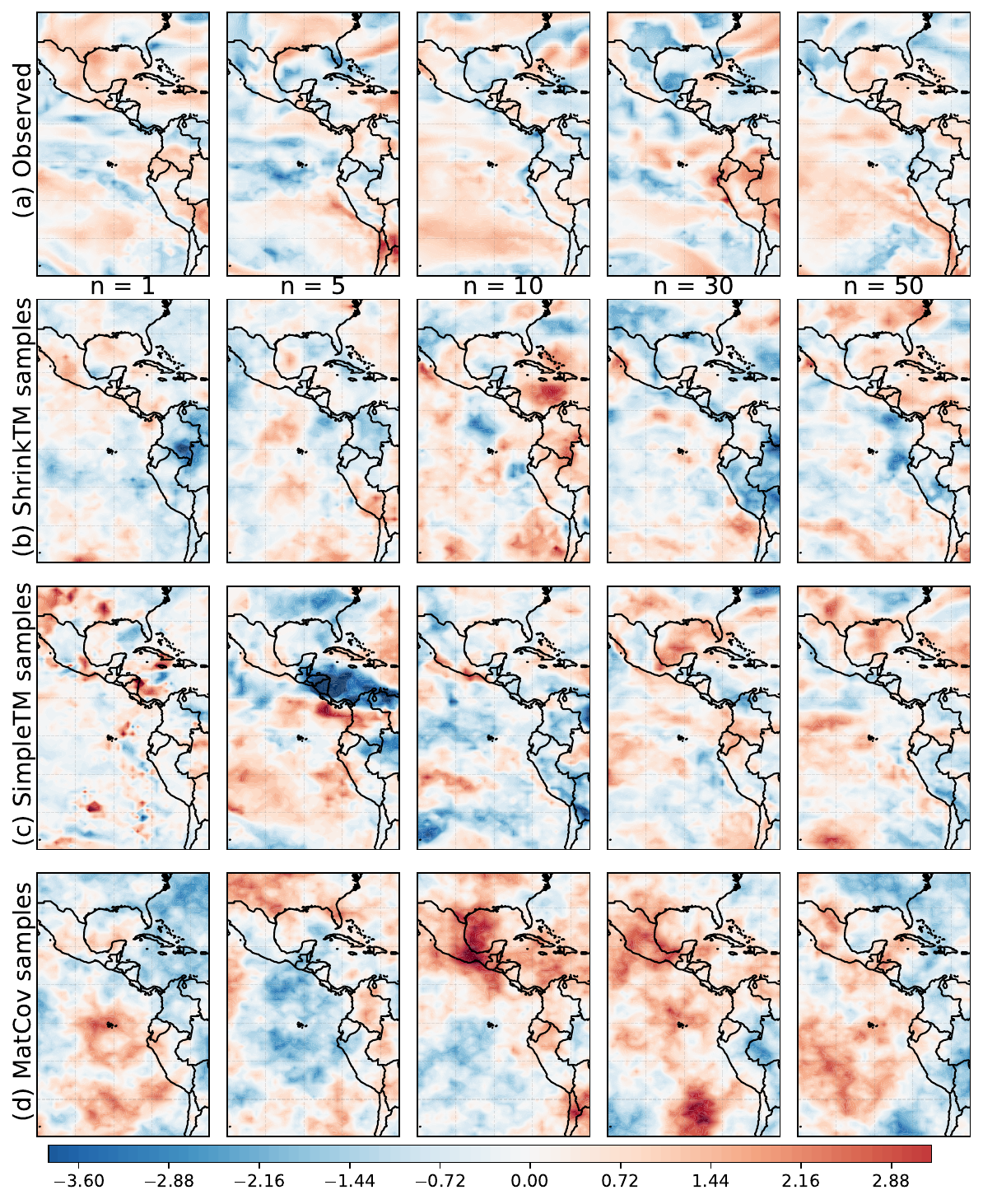}
    \caption{Observed and fitted samples for the climate model experiment outlined in Section~\ref{sec:simulation}. Row (a) represents 5 observed samples from the climate model run described in Section~\ref{sec:climateapplication}. Row (b), (c) and (d) represent samples from ShrinkTM, SimpleTM and MatCov for varying training ensemble size, denoted by $n$.}
    \label{fig:americas_samples}
\end{figure}

\begin{figure}
    \centering
    \includegraphics[width=\linewidth]{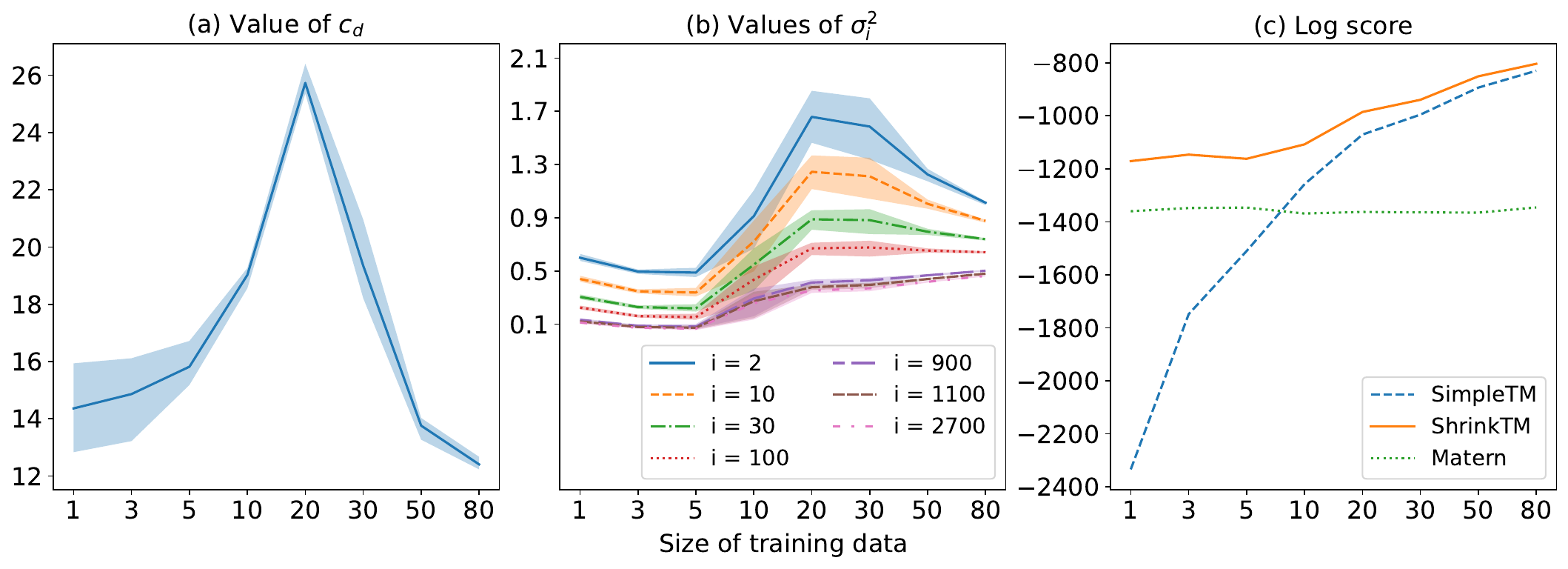}
    \caption{Shrinkage metrics and logarithmic scores at varying training ensembles of log-precipitation data outlined in Section~\ref{sec:climateapplication}.}
    \label{fig:americas_numerics}
\end{figure}
We consider log-transformed total precipitation rate (in m/s) on a roughly $1^\circ$ longitude-latitude global grid of size N = 37 × 74 = 2738 in the middle of the Northern summer(July 1) in 98 consecutive years, starting in the year 402, from the Community Earth System Model (CESM) Large Ensemble Project \citep{Kay2015}. This dataset was also analyzed in \citet{Katzfuss2021}. CESM belongs to a broader class of climate models, which are large sets of computer code describing the behavior of the Earth system (e.g., the atmosphere) via systems of differential equations. Enormous computational power is required to produce even a single ensemble of these models on fine latitude-longitude grids from these computer codes. Due to the massive cost of obtaining a single sample from these climate models, it is important to build a stochastic generative model based on few training samples from the climate model, in order to produce relevant summaries and even draw more samples at much cheaper cost.

\begin{figure}[!htbp]
    \centering
    \includegraphics[width=\linewidth]{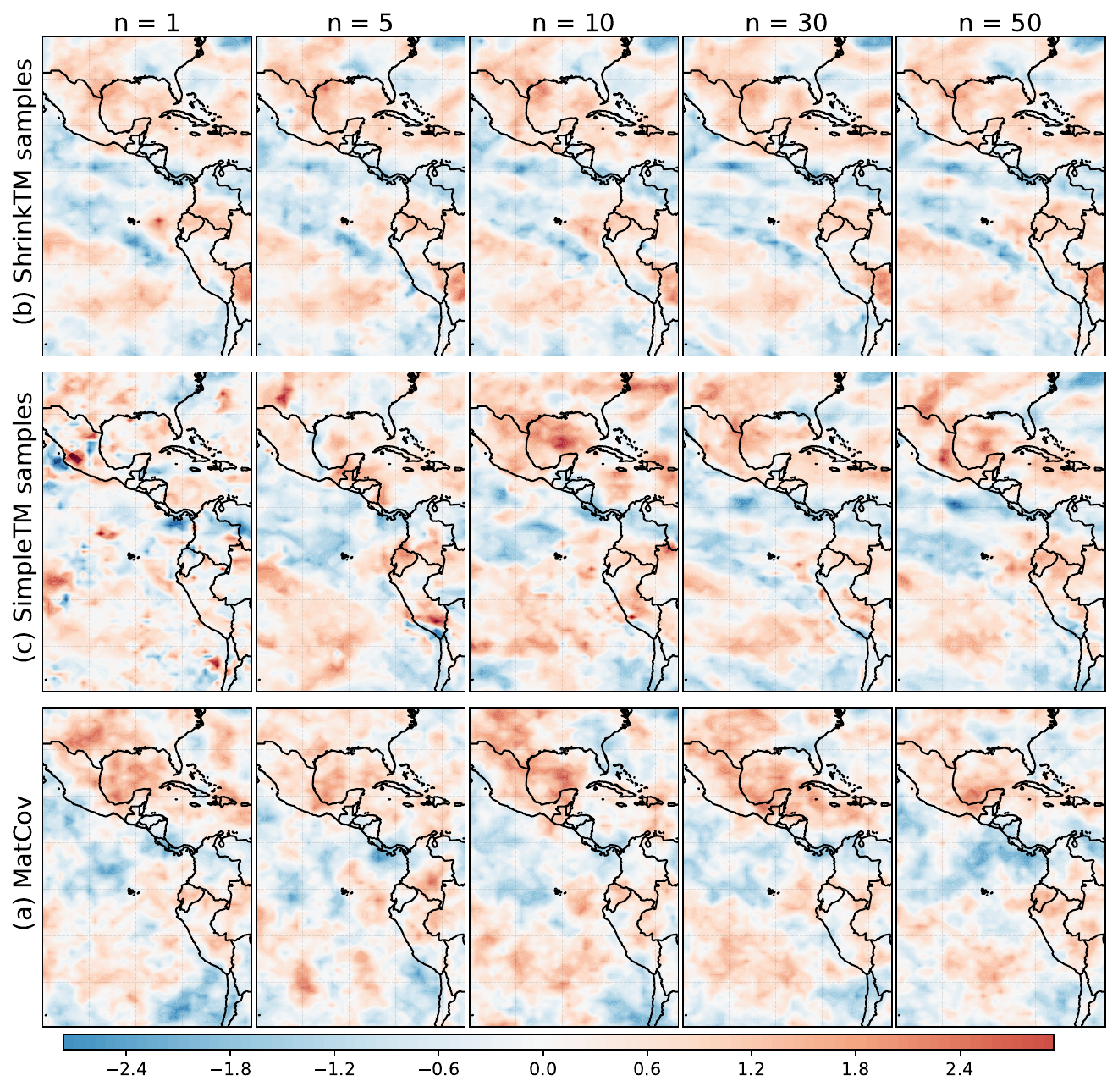}
    \caption{Partially observed samples generated from MatCov, ShrinkTM and SimpleTM for varying training ensembles. For each of these samples, the output at first 100 ordered locations have been fixed at the values from the figure in the first column of Figure~\ref{fig:americas_samples}(a).}
    \label{fig:americas_partsamples}
\end{figure}
We consider a subsection of the western hemisphere (39.1$^\circ$N to 29.6$^\circ$S and 110$^\circ$W to 65$^\circ$W) containing parts of land including North, Central and South America, and a subsection of Atlantic and Pacific ocean. We obtain precipitation anomalies by standardizing the data at each grid location to mean zero and variance 1 (as shown in Figure~\ref{fig:americas_samples}(a)), and used them as training ensembles.

\begin{figure}
    \centering
    \includegraphics[width=0.5\linewidth]{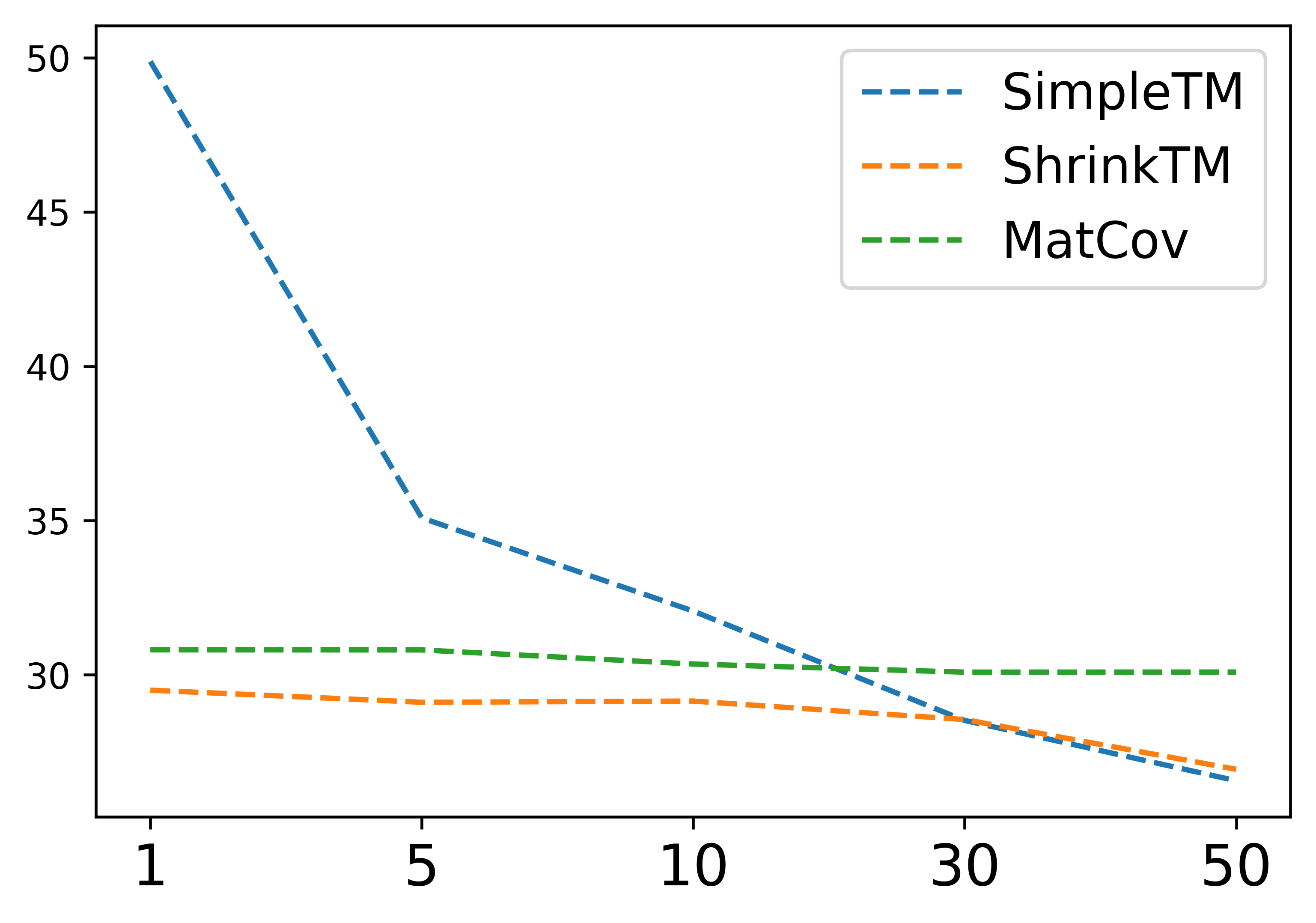}
    \caption{In the partially observed setting of Figure~\ref{fig:americas_partsamples}, root mean square error at held-out locations for conditional samples from three generative models: SimpleTM, ShrinkTM, and GP with Mat\`ern kernel.}
    \label{fig:climate_mse}
\end{figure}

Figure \ref{fig:americas_samples} displays a set of samples obtained from SimpleTM, ShrinkTM, and MatCov for several training ensemble sizes. It is evident that with a low number of training samples (i.e., $n = 1$), ShrinkTM is able to capture non-Gaussian features and long-range characteristics, while SimpleTM requires 10 or more training samples to exhibit such features. When trained on $n=1, 5, 10, 30, 50$ training samples, ShrinkTM selects $m'=3, 3, 4, 16, 20$ nearest neighbors to learn non-Gaussian distribution, respectively. On average, training of ShrinkTM takes 39 seconds, in comparison to 3 seconds taken by SimpleTM for $n=1$. The average training time increases to 9, 18, 82 seconds for SimpleTM, and 46, 53, 123 seconds for ShrinkTM when training on $n=10, 20, 50$ samples, respectively. Figure~\ref{fig:americas_numerics}(a) shows that for small $n$, the shrinkage factor $c_d$ has an extremely high value, suggesting non-Gaussianity in the samples. The level of non-Gaussianity is better learned with increase in training-ensemble size.

We quantitatively compare the performance of ShrinkTM and SimpleTM using average log-score. Figure~\ref{fig:americas_numerics}(c) shows the average log-score for these three methods with varying number of training ensembles. ShrinkTM outperforms both SimpleTM and MatCov for all training-sample sizes, including providing higher accuracy than the ``base'' MatCov approach even for a single training sample.

We consider conditional simulation to sample precipitation anomalies when a sample is partially observed. To do this, we consider the precipitation anomalies in the top-left panel of Figure~\ref{fig:americas_samples} (which were not part of the training data) and assume that we have observed only the first $100$ ordered locations of this test field. Figure~\ref{fig:americas_partsamples} displays the conditional samples for different methods for varying training ensembles conditional on this partially observed field. It can be seen (Figure~\ref{fig:americas_partsamples}(b)) that with only a few number of training ensembles, ShrinkTM is able to capture most of the long range dependencies that are seen in Figure~\ref{fig:americas_samples} (topleft panel) as well. SimpleTM needs around 30 samples to capture these dependencies, while MatCov is not able to capture them at all.
Figure~\ref{fig:climate_mse} shows root mean square error (RMSE) of predicting the held-out locations when repeating this experiment 10 times.

\section{Conclusion \label{sec:conclusion}}

We proposed an improved version of the BTM \citep{Katzfuss2021} by centering the mean and variance functions of its map components around the conditional means and variances obtained from a Gaussian process (GP) with a parametric covariance, whose parameters are estimated from the training data. We also employ a Vecchia approximation in the calculation of conditional means and variances of the GP. This new BTM version, which we call ShrinkTM, preserves  the scalability and flexibility of BTM. A notable advantage of ShrinkTM is that it can cheaply produce realistic samples even when the transport map is trained with only a single training ensemble, and it is thus especially useful when few or even only a single training sample is available. ShrinkTM is nonparametric and largely automated, in that it does not require specifying the form or extent of non-Gaussianity or nonstationarity. We have demonstrated these crucial advantages using simulated datasets and log-precipitation datasets obtained from the CESM climate model.  A Python implementation of our methods, along with code to reproduce all results, is available at \url{https://github.com/katzfuss-group/batram/tree/ShrinkTM}. 
Our work can be further extended to multivariate transport maps (e.g., \citet{Wiemann2023BayesianFields}). Moreover, we can vary the hyperparameters of the Gaussian process over a covariate space and thus produce a covariate-dependent shrinkage of the heteroscedastic transport map \citep{drennan2024covtmprep}, which can be applicable to a wider range of applications. We plan to tackle these problems in future work.

\footnotesize
\appendix
\section*{Acknowledgments}

This work was supported by NASA's Advanced Information Systems Technology Program (AIST-21). 
MK's research was also partially supported by National Science Foundation (NSF) Grants DMS--1953005 and DMS--2433548 and by the Office of the Vice Chancellor for Research and Graduate Education at the University of Wisconsin--Madison with funding from the Wisconsin Alumni Research Foundation. We would like to thank Daniel Drennan for helpful comments on the \texttt{PyTorch} implementation.


\section{Proofs \label{app:proofs}}
Similar to the propositions, the proofs also closely follow \citet{Katzfuss2021}.
\begin{proof}[Proof of Proposition \ref{prop:maps}]
Combining \eqref{eq:origtmapreform} with the conditional independence of $\by^{(1)},\ldots,\by^{(n)}$, we have
\begin{equation}
    \label{eq:reg}
\textstyle p(\bY|\bf,\bd) = \prod_{i=1}^N \prod_{j=1}^n \normal(y_i^{(j)}|f_i(\by_{1:i-1}^{(j)}),d_i^2) = \prod_{i=1}^N \normal_n(\by_i|\bf_i,d_i^2\bI_n),
\end{equation}
where $\bf_i = f_i(\bY_{1:i-1}) = \big(f_i(\by_{1:i-1}^{(1)}),\ldots,f_i(\by_{1:i-1}^{(n)}) \big)^\top$ is distributed as $\bf_i|d_i,\bY_{1:i-1} \sim \normal(\bY_{g_m(i)}\xi^\top_i,d_i^2\bK_i)$. 
Combined with \eqref{eq:dprior}, we see that $\bf_i, d_i$ (conditional on $\bY_{1:i-1}$) jointly follow a (multivariate) normal-inverse-gamma (NIG) distribution, independently for each $\bf_i,d_i$. 
Given the data $\bY$ as in \eqref{eq:reg}, well-known conjugacy results imply that the posterior of $\bF = (\bf_1,\ldots,\bf_N)$ and $\bd=(d_1,\ldots,d_N)$ also consists of independent NIG distributions:
\begin{equation}
\textstyle p(\bF,\bd|\bY)  
\propto \prod_{i=1}^N p(\by_i|\bf_i,d_i) p(\bf_i|d_i,\bY_{1:i-1}) \, p(d_i) \propto \prod_{i=1}^N \normal(\bf_i|\hat\bf_i,d_i^2 \tilde\bK_i) \, \mathcal{IG}(d_i^2|\tilde\alpha_i,\tilde\beta_i), \label{eq:nigpost}
\end{equation}
where $\tilde\bK_i = \bK_i - \bK_i \bG_i^{-1}\bK_i$ and $\hat\bf_i = \bG_i^{-1}\bY_{1:i-1}\Lambda_i^\top + \bK_i \bG_i^{-1} \by_i$.

We have
$p(\by^\star|\bf, \bd) = \prod_{i=1}^N \normal(y_i^\star|f_i(\by_{1:i-1}^*),d_i^2)$ using \eqref{eq:origtmapreform}. Combining this with \eqref{eq:nigpost} and the conditional-independence assumption in \eqref{eq:gp}, the posterior predictive distribution can be shown to be
\[
p(\by^\star|\bY) = \prod_{i=1}^N \int p(y_i^\star|\by_{1:i-1}^*,\bY,d_i) p(d_i|\bY) d d_i,
\]
where basic GP regression implies
\begin{equation}
\label{eq:condpred}
y_i^\star|\by^\star_{1:i-1},\bY,d_i \sim \normal\big(\hat f_i(\by^\star_{1:i-1}), d_i^2( v_i(\by^\star_{1:i-1})+1 )\big), \qquad i=1,\ldots,N.
\end{equation}
Combining this with $d_i^2 | \bY \sim \mathcal{IG}(\tilde\alpha_i,\tilde\beta_i)$ from \eqref{eq:nigpost},
we obtain the posterior predictive distribution as a product of $t$ densities,
\begin{equation}
    \label{eq:tstar}
    \textstyle p(\by^\star|\bY) = \prod_{i=1}^N t_{2\tilde\alpha_i}\big(y_i^\star\big|\hat f_i(\by^\star_{1:i-1}), \hat d_i^2( v_i(\by^\star_{1:i-1})+1 )\big),
\end{equation}
where our notation is such that $w \sim t_{\kappa}(\mu,\sigma^2)$ implies that $(w - \mu)/\sigma$ follows a ``standard'' $t$ with $\kappa$ degrees of freedom.
Hence, $\hat d_i^{-1} (v_i(\by^\star_{1:i-1})+1)^{-1/2}(y_i - \hat f_i(\by^\star_{1:i-1}))$ follows a $t_{2\tilde\alpha_i}$ distribution.
Using the fact that we can map from a distribution to the standard uniform using its cumulative distribution, the transformation
$\bz^\star = \pmap(\by^\star) \sim \normal_N(\bfzero,\bI_N)$ to a standard normal can be described using a triangular map with components
\begin{equation}
\label{eq:singlemap2}
z_i^\star = \pmap_i(y_1^\star,\ldots,y_i^\star) = \Phi^{-1}\big( F_{2\tilde\alpha_i}\big( \hat d_i^{-1} (v_i(\by^\star_{1:i-1})+1)^{-1/2}(y_i^\star - \hat f_i(\by^\star_{1:i-1})) \big)\big).
\end{equation}
The solution $\by^\star$ to the nonlinear triangular system $\pmap(\by^\star) =\bz^\star$ is found recursively by solving \eqref{eq:singlemap2} for $y_i^\star$:
\begin{equation}
y_i^\star = F_{2\tilde\alpha_i}^{-1}(\Phi(z_i^\star))\, \hat d_i (v_i(\by_{1:i-1}^\star)+1)^{1/2} + \hat f_i(\by_{1:i-1}^\star).
\end{equation}
\end{proof}

\begin{proof}[Proof of Proposition \ref{prop:lik}]
From \eqref{eq:gpnew}, we have that $\bf_i | d_i \stackrel{ind.}{\sim} \normal_n(\bfxi_i^\top\by_{g_m(i)},d_i^2\bK_i)$; together with \eqref{eq:reg}, this implies that $\by_i|d_i,\bY_{1:i-1} \stackrel{ind.}{\sim}  \normal_n(\bY_{g_m(i)}\bfxi_i,d_i^2\bG_i)$. Combining this with \eqref{eq:dprior}, it is well known that $\by_i | \bY_{1:i-1} \stackrel{ind.}{\sim} t_{2\alpha_i}(\bY_{g_m(i)}\bfxi_i,\frac{\beta_i}{\alpha_i}\bG_i)$, where we define a multivariate $t$ distribution such that $\bw \sim t_{\kappa}(\bfmu,\bfSigma)$ implies that the entries of $\bfSigma^{-1/2}(\bw - \bfmu)$ are i.i.d.\ standard $t$ with $\kappa$ degrees of freedom. 
Plugging in the $t$ densities and simplifying using $\tilde\alpha_i = \alpha_i + n/2$, 
$\tilde\beta_i = \beta_i + (\by_i - \bY_{g_m(i)}\bfxi_i)^\top \bG_i^{-1} (\by_i - \bY_{g_m(i)}\bfxi_i)/2$, we can obtain
\begin{align}
  p(\bY) 
  & = \textstyle \prod_{i=1}^N t_{2\alpha_i}(\by_i |\bfzero,\frac{\beta_i}{\alpha_i}\bG_i)\\
  & \propto \textstyle \prod_{i=1}^N \Gamma(\tilde\alpha_i) \big( \Gamma(\alpha_i) (\alpha_i \beta_i/\alpha_i)^{n/2} |\bG_i|^{1/2}\big)^{-1} \big( 1+ \alpha_i/(\beta_i 2 \alpha_i) \by_i^\top \bG_i^{-1}\by_i \big)^{-\tilde\alpha_i}\\
  & \propto \textstyle\prod_{i=1}^N \big( \, |\bG_i|^{-1/2} \times ({\beta_i^{\alpha_i}}/{\tilde\beta_i^{\tilde\alpha_i}}) \times {\Gamma(\tilde\alpha_i)}/{\Gamma(\alpha_i)} \, \big),
\end{align}
where $\Gamma(\cdot)$ denotes the gamma function.
\end{proof}

\footnotesize
\appendix
\section*{Acknowledgments}

This work was supported by NASA's Advanced Information Systems Technology Program (AIST-21). 
MK's research was also partially supported by National Science Foundation (NSF) Grants DMS--1953005 and DMS--2433548 and by the Office of the Vice Chancellor for Research and Graduate Education at the University of Wisconsin--Madison with funding from the Wisconsin Alumni Research Foundation. We would like to thank Daniel Drennan for helpful comments on the \texttt{PyTorch} implementation.

\bibliographystyle{apalike}
\bibliography{main}

\begin{thebibliography}{}

\bibitem[Arjovsky and Bottou, 2017]{Arjovsky2017}
Arjovsky, M. and Bottou, L. (2017).
\newblock {Towards principled methods for training generative adversarial networks}.
\newblock In {\em International Conference on Learning Representations}.

\bibitem[Banerjee et~al., 2004]{Banerjee2004}
Banerjee, S., Carlin, B.~P., and Gelfand, A.~E. (2004).
\newblock {\em {Hierarchical Modeling and Analysis for Spatial Data}}.
\newblock Chapman {\&} Hall.

\bibitem[Carlier et~al., 2009]{Carlier2009}
Carlier, G., Galichon, A., and Santambrogio, F. (2009).
\newblock {From Knothe's transport to Brenier's map and a continuation method for optimal transport}.
\newblock {\em SIAM Journal on Mathematical Analysis}, 41(6):2554--2576.

\bibitem[Castruccio et~al., 2014]{Castruccio2014}
Castruccio, S., McInerney, D.~J., Stein, M.~L., Crouch, F.~L., Jacob, R.~L., and Moyer, E.~J. (2014).
\newblock {Statistical emulation of climate model projections based on precomputed GCM runs}.
\newblock {\em Journal of Climate}, 27(5):1829--1844.

\bibitem[Cressie, 1993]{Cressie1993}
Cressie, N. (1993).
\newblock {\em {Statistics for Spatial Data, revised edition}}.
\newblock John Wiley {\&} Sons, New York, NY.

\bibitem[Datta et~al., 2016]{Datta2016}
Datta, A., Banerjee, S., Finley, A.~O., and Gelfand, A.~E. (2016).
\newblock {Hierarchical nearest-neighbor Gaussian process models for large geostatistical datasets}.
\newblock {\em Journal of the American Statistical Association}, 111(514):800--812.

\bibitem[Drennan et~al., prep]{drennan2024covtmprep}
Drennan, D., Wiemann, P., and Katzfuss, M. (in prep).
\newblock {Generative modeling of conditional spatial distributions via autoregressive Gaussian processes}.
\newblock {\em In preparation}.

\bibitem[Goodfellow et~al., 2016]{Goodfellow2016}
Goodfellow, I., Bengio, Y., and Courville, A. (2016).
\newblock {\em {Deep Learning}}.
\newblock MIT Press.

\bibitem[Gr{\"{a}}ler, 2014]{Graler2014}
Gr{\"{a}}ler, B. (2014).
\newblock {Modelling skewed spatial random fields through the spatial vine copula}.
\newblock {\em Spatial Statistics}, 10:87--102.

\bibitem[Haugen et~al., 2019]{Haugen2019}
Haugen, M.~A., Stein, M.~L., Sriver, R.~L., and Moyer, E.~J. (2019).
\newblock {Future climate emulations using quantile regressions on large ensembles}.
\newblock {\em Advances in Statistical Climatology, Meteorology, and Oceanography}, 5:37--55.

\bibitem[Hestness et~al., 2017]{Hestness2017}
Hestness, J., Narang, S., Ardalani, N., Diamos, G., Jun, H., Kianinejad, H., Patwary, M. M.~A., Yang, Y., and Zhou, Y. (2017).
\newblock {Deep learning scaling is predictable, empirically}.
\newblock {\em arXiv:1712.00409}.

\bibitem[Houtekamer and Zhang, 2016]{Houtekamer2016}
Houtekamer, P.~L. and Zhang, F. (2016).
\newblock {Review of the ensemble Kalman filter for atmospheric data assimilation}.
\newblock {\em Monthly Weather Review}, 144(12):4489--4532.

\bibitem[Jurek and Katzfuss, 2022]{Jurek2020}
Jurek, M. and Katzfuss, M. (2022).
\newblock {Hierarchical sparse Cholesky decomposition with applications to high-dimensional spatio-temporal filtering}.
\newblock {\em Statistics and Computing}, 32:15.

\bibitem[Kang and Katzfuss, 2023]{Kang2021}
Kang, M. and Katzfuss, M. (2023).
\newblock {Correlation-based sparse inverse Cholesky factorization for fast Gaussian-process inference}.
\newblock {\em Statistics and Computing}, 33(56):1--17.

\bibitem[Kang et~al., 2024]{Kang2024}
Kang, M., Sch{\"{a}}fer, F., Guinness, J., and Katzfuss, M. (2024).
\newblock {Asymptotic properties of Vecchia approximation for Gaussian processes}.
\newblock {\em arXiv:2401.15813}.

\bibitem[Kashinath et~al., 2021]{Kashinath2021}
Kashinath, K., Mustafa, M., Albert, A., Wu, J.~L., Jiang, C., Esmaeilzadeh, S., Azizzadenesheli, K., Wang, R., Chattopadhyay, A., Singh, A., Manepalli, A., Chirila, D., Yu, R., Walters, R., White, B., Xiao, H., Tchelepi, H.~A., Marcus, P., Anandkumar, A., Hassanzadeh, P., and {Prabhat} (2021).
\newblock {Physics-informed machine learning: Case studies for weather and climate modelling}.
\newblock {\em Philosophical Transactions of the Royal Society A}, 379(20200093).

\bibitem[Katzfuss and Guinness, 2021]{Katzfuss2017a}
Katzfuss, M. and Guinness, J. (2021).
\newblock {A General Framework for Vecchia Approximations of Gaussian Processes}.
\newblock {\em Statistical Science}, 36(1):124--141.

\bibitem[Katzfuss et~al., 2020]{Katzfuss2018}
Katzfuss, M., Guinness, J., Gong, W., and Zilber, D. (2020).
\newblock {Vecchia approximations of Gaussian-process predictions}.
\newblock {\em Journal of Agricultural, Biological, and Environmental Statistics}, 25(3):383--414.

\bibitem[Katzfuss and Sch{\"{a}}fer, 2024]{Katzfuss2021}
Katzfuss, M. and Sch{\"{a}}fer, F. (2024).
\newblock {Scalable Bayesian transport maps for high-dimensional non-Gaussian spatial fields}.
\newblock {\em Journal of the American Statistical Association}, 119(546):1409--1423.

\bibitem[Katzfuss et~al., 2016]{Katzfuss2015b}
Katzfuss, M., Stroud, J.~R., and Wikle, C.~K. (2016).
\newblock {Understanding the ensemble Kalman filter}.
\newblock {\em The American Statistician}, 70(4):350--357.

\bibitem[Kay et~al., 2015]{Kay2015}
Kay, J.~E., Deser, C., Phillips, A., Mai, A., Hannay, C., Strand, G., Arblaster, J.~M., Bates, S.~C., Danabasoglu, G., Edwards, J., Holland, M., Kushner, P., Lamarque, J.~F., Lawrence, D., Lindsay, K., Middleton, A., Munoz, E., Neale, R., Oleson, K., Polvani, L., and Vertenstein, M. (2015).
\newblock {The Community Earth System Model (CESM) Large Ensemble Project: A community resource for studying climate change in the presence of internal climate variability}.
\newblock {\em Bulletin of the American Meteorological Society}, 96(8):1333--1349.

\bibitem[Kidd and Katzfuss, 2022]{Kidd2020}
Kidd, B. and Katzfuss, M. (2022).
\newblock {Bayesian nonstationary and nonparametric covariance estimation for large spatial data (with discussion)}.
\newblock {\em Bayesian Analysis}, 17(1):291--351.

\bibitem[Kobyzev et~al., 2021]{Kobyzev2020normalizing}
Kobyzev, I., Prince, S.~J., and Brubaker, M.~A. (2021).
\newblock Normalizing flows: An introduction and review of current methods.
\newblock {\em IEEE Transactions on Pattern Analysis and Machine Intelligence}, 43(11):3964--3979.

\bibitem[Marzouk et~al., 2016]{Marzouk2016}
Marzouk, Y.~M., Moselhy, T., Parno, M., and Spantini, A. (2016).
\newblock {Sampling via measure transport: An introduction}.
\newblock In Ghanem, R., Higdon, D., and Owhadi, H., editors, {\em Handbook of Uncertainty Quantification}. Springer.

\bibitem[Mescheder et~al., 2018]{Mescheder2018}
Mescheder, L., Geiger, A., and Nowozin, S. (2018).
\newblock {Which training methods for GANs do actually converge?}
\newblock In {\em International Conference on Machine Learning}, pages 3481--3490.

\bibitem[Nychka et~al., 2018]{Nychka2018}
Nychka, D.~W., Hammerling, D.~M., Krock, M., and Wiens, A. (2018).
\newblock {Modeling and emulation of nonstationary Gaussian fields}.
\newblock {\em Spatial Statistics}, 28:21--38.

\bibitem[Rosenblatt, 1952]{Rosenblatt1952}
Rosenblatt, M. (1952).
\newblock {Remarks on a multivariate transformation}.
\newblock {\em The Annals of Mathematical Statistics}, 23(3):470--472.

\bibitem[Sch{\"{a}}fer et~al., 2021a]{Schafer2020}
Sch{\"{a}}fer, F., Katzfuss, M., and Owhadi, H. (2021a).
\newblock {Sparse Cholesky factorization by Kullback-Leibler minimization}.
\newblock {\em SIAM Journal on Scientific Computing}, 43(3):A2019--A2046.

\bibitem[Sch{\"{a}}fer et~al., 2021b]{Schafer2017}
Sch{\"{a}}fer, F., Sullivan, T.~J., and Owhadi, H. (2021b).
\newblock {Compression, inversion, and approximate PCA of dense kernel matrices at near-linear computational complexity}.
\newblock {\em Multiscale Modeling {\&} Simulation}, 19(2):688--730.

\bibitem[Sch\"{a}fer et~al., 2021]{Schafer2021b}
Sch\"{a}fer, F., Sullivan, T.~J., and Owhadi, H. (2021).
\newblock Compression, inversion, and approximate pca of dense kernel matrices at near-linear computational complexity.
\newblock {\em Multiscale Modeling \& Simulation}, 19(2):688--730.

\bibitem[Stein, 2011]{Stein2011}
Stein, M.~L. (2011).
\newblock {When does the screening effect hold?}
\newblock {\em Annals of Statistics}, 39(6):2795--2819.

\bibitem[Stein et~al., 2004]{Stein2004}
Stein, M.~L., Chi, Z., and Welty, L. (2004).
\newblock {Approximating likelihoods for large spatial data sets}.
\newblock {\em Journal of the Royal Statistical Society: Series B}, 66(2):275--296.

\bibitem[Vecchia, 1988]{Vecchia1988}
Vecchia, A. (1988).
\newblock {Estimation and model identification for continuous spatial processes}.
\newblock {\em Journal of the Royal Statistical Society, Series B}, 50(2):297--312.

\bibitem[Wiemann and Katzfuss, 2023]{Wiemann2023BayesianFields}
Wiemann, P. F.~V. and Katzfuss, M. (2023).
\newblock {Bayesian nonparametric generative modeling of large multivariate non-Gaussian spatial fields}.
\newblock {\em Journal of Agricultural, Biological and Environmental Statistics}, 28(4):597--617.

\bibitem[Wiens et~al., 2020]{Wiens2020}
Wiens, A., Nychka, D.~W., and Kleiber, W. (2020).
\newblock {Modeling spatial data using local likelihood estimation and a Mat{\'{e}}rn to spatial autoregressive translation}.
\newblock {\em Environmetrics}, 31(6):1--15.

\bibitem[Zhang and Katzfuss, 2022]{Zhang2022}
Zhang, J. and Katzfuss, M. (2022).
\newblock {Multi-scale Vecchia approximations of Gaussian processes}.
\newblock {\em Journal of Agricultural, Biological and Environmental Statistics}, 27:440--460.

\bibitem[Zilber and Katzfuss, 2021]{Zilber2019}
Zilber, D. and Katzfuss, M. (2021).
\newblock {Vecchia-Laplace approximations of generalized Gaussian processes for big non-Gaussian spatial data}.
\newblock {\em Computational Statistics {\&} Data Analysis}, 153:107081.

\end{thebibliography}

\end{document}